\let\oldnl\nl
\newcommand{\nonl}{\renewcommand{\nl}{\let\nl\oldnl}}
\newcommand\numberthis{\addtocounter{equation}{1}\tag{\theequation}}
\newsavebox\myboxA
\newsavebox\myboxB
\newlength\mylenA
\newcommand*\xoverline[2][0.75]{%
    \sbox{\myboxA}{$\m@th#2$}%
    \setbox\myboxB\null
    \ht\myboxB=\ht\myboxA%
    \dp\myboxB=\dp\myboxA%
    \wd\myboxB=#1\wd\myboxA
    \sbox\myboxB{$\m@th\overline{\copy\myboxB}$}
    \setlength\mylenA{\the\wd\myboxA}
    \addtolength\mylenA{-\the\wd\myboxB}%
    \ifdim\wd\myboxB<\wd\myboxA%
       \rlap{\hskip 0.5\mylenA\usebox\myboxB}{\usebox\myboxA}%
    \else
        \hskip -0.5\mylenA\rlap{\usebox\myboxA}{\hskip 0.5\mylenA\usebox\myboxB}%
    \fi}
\def\phi{\varphi}
\def\tendsto{\rightarrow}
\def\Tp{{\intercal}}
\newcommand{\comm}[1]{}
\def\bff{{\mathbf{f}}}
\def\bg{{\mathbf{g}}}
\def\bk{{\mathbf{k}}}
\def\bv{{\mathbf{v}}}
\def\bx{{\mathbf{x}}}
\def\by{{\mathbf{y}}}
\def\bz{{\mathbf{z}}}
\def\b0{{\mathbf{0}}}
\def\argmax{\mathop{\mathrm{argmax}}}
\def\bQ{{\mathbf{Q}}}
\def\cH{\mathcal{H}}
\def\cI{\mathcal{I}}
\def\cN{\mathcal{N}}
\def\cP{\mathcal{P}}
\def\cR{\mathcal{R}}
\def\cX{\mathcal{X}}
\def\cY{\mathcal{Y}}
\def \Re[#1]{\text{Re}\left(#1\right)}
\def \Im[#1]{\text{Im}\left(#1\right)}
\def \Cpx[#1]{\tilde{#1}}
\def \tx {x} 
\def \rx {y} 
\def \txv {\bx} 
\def \rxv {\by} 
\def \txm {X} 
\def \rxm {Y} 
\def \txA {\cX} 
\def \rxA {\cY} 
\def \chm {G} 
\def \txvte[#1]{\txv_{\rm t,#1}} 
\def \rxvte[#1]{\rxv_{\rm t,#1}} 
\def \txvde[#1]{\txv_{\rm d,#1}} 
\def \rxvde[#1]{\rxv_{\rm d,#1}} 
\def \nvbe[#1]{\nv_{\rm b,#1}} 
\def \txde[#1]{\tx_{\rm d,#1}} 
\def \rxde[#1]{\rx_{\rm d,#1}} 
\def \txta[#1]{\tx_{{\rm t},#1}} 
\def \txda[#1]{\tx_{{\rm d},#1}} 
\def \txvta[#1]{\txv_{{\rm t},#1}} 
\def \rxvta[#1]{\rxv_{{\rm t},#1}} 
\def \txvda[#1]{\txv_{{\rm d},#1}} 
\def \rxvda[#1]{\rxv_{{\rm d},#1}} 
\def \nvba[#1]{\nv_{{\rm d},#1}} 
\def \rxta[#1]{\rx_{{\rm t},#1}} 
\def \txda[#1]{\tx_{{\rm d},#1}} 
\def \rxda[#1]{\rx_{{\rm d},#1}}
\def \ptrve[#1]{\hat{p}_{(\txvte[#1],\rxvte[#1])}}
\def \onev[#1]{\underline{1}_{#1}}
\def \MuI {I} 
\def \Ent {H} 
\def \EntR {\cH}
\def \EntInn {{\cH{'}}}
\def \E {\mathbb{E}} 
\def \Pr {{P}} 
\def \chv {\bg} 
\def \chs {g} 
\def \tn {M} 
\def \rn {N} 
\def \ns  {v}  
\def \nv {\bv} 
\def \Tb {T} 
\def \Tt {{\tau T}} 
\def \Tlim {\Tb} 
\def \sign {\text{sign}}
\DeclarePairedDelimiter{\ceil}{\lceil}{\rceil}
\DeclarePairedDelimiter{\floor}{\lfloor}{\rfloor}
\def \ratio {\alpha}
\def \qha[#1]{q_{{ \chs},#1}}
\def \Qha[#1]{Q_{{\rm \chs},#1}}
\def \qxa[#1]{q_{{ x},#1}}
\def \Qxa[#1]{Q_{{\rm x},#1}}
\def \bQha[#1]{\bQ_{{\rm h},#1}}
\def \bQxa[#1]{\bQ_{{\rm x},#1}}
\def \Ropt {\cR_{\rm opt}}
\def \dy {\varepsilon}
\def \dx {\delta}
\def \iid {{\it iid}\xspace}
\def \rxdequa[#1]{\tilde{\rx}_{{\rm d},#1}}
\def \htxta[#1]{\hat{\tx}_{{\rm t},#1}} 
\def \htxda[#1]{\hat{\tx}_{{\rm d},#1}} 
\def \ttxta[#1]{\tilde{\tx}_{{\rm t},#1}} 
\def \ttxda[#1]{\tilde{\tx}_{{\rm d},#1}} 
\def \zta[#1]{z_{{\rm t},#1}}
\def \bzda[#1]{\bz_{{\rm d},#1}}
\def \bzta[#1]{\bz_{{\rm t},#1}}
\def \zda[#1]{z_{{\rm d},#1}}
\def \rxvda[#1]{\rxv_{{\rm d},#1}}
\def \expeq[#1]{\overset{{#1}}{\equiv}}
\def \Enta[#1]{\Omega_{#1}}
\def \vara[#1]{\sigma^2_{#1}}
\def \tauopt{\tau_{{\rm opt}}}
\def \upto {{\nearrow}}
\def \downto {{\searrow}}
\def \EntInnC#1#2#3#4{\EntInn({#2}_{#4}|{#1}_{#3})}
\def \EntInnY#1#2{\EntInn({#1}_{#2})}
\def \AEntInnY#1#2{\bar{\EntInn}({#1}_{#2})}
\def \Pe {P_{\rm e}}
\def \cIInn#1#2#3{\cI{'}({#1}_{#3};{#2}_{#3})}
\def \cII#1#2#3{\cI{'}({#1};{#2}{#3})}
\def \cIRA#1#2#3{\cI({#1};{#2}{#3})}
\def \chva#1{\chv_{#1}}
\newtheorem{thm}{{Theorem}}
\newtheorem{cor}{Corollary}
\begin{document}

\title{A Training-Based Mutual Information Lower Bound for Large-Scale Systems}
\author{Xiangbo Meng,~\IEEEmembership{Student Member,~IEEE,} Kang~Gao,~\IEEEmembership{Student Member,~IEEE,} and~Bertrand~M.~Hochwald,~\IEEEmembership{Fellow,~IEEE}%
\thanks{This work was generously supported by NSF Grant \#1731056 and Futurewei Technologies, Inc.}
\thanks{Xiangbo Meng, Kang Gao, and Bertrand M. Hochwald are with the Department of Electrical Engineering, University of Notre Dame, Notre Dame, IN, 46556 USA (email:xmeng@nd.edu; kgao@nd.edu; bhochwald@nd.edu).}}
\maketitle
\thispagestyle{plain}
\pagestyle{plain}
\begin{abstract}
We provide a mutual information lower bound that can be used to analyze the effect of training in models with unknown parameters. For large-scale systems, we show that this bound can be calculated using the difference between two derivatives of a conditional entropy function.  The bound does not require explicit estimation of the unknown parameters.  We provide a step-by-step process for computing the bound, and provide an example application.  A comparison with known classical mutual information bounds is provided.
\end{abstract}
\begin{IEEEkeywords}
information rates, training, entropy, large-scale systems
\end{IEEEkeywords}
\IEEEpeerreviewmaketitle

\section{Introduction}

\comm{
Consider a linear system and get a bound for a system with training. (still require the reference of Taiwanese paper for Entropy computation, but focus on linear systems)
}



Many systems have unknown parameters that are estimated during a training-phase with the help of known prescribed training signals.  This phase is followed by a data phase, where knowledge of the estimated parameters is used to process the data.  It is generally assumed that the parameters are constant during these two phases, the total duration of which is called the coherence time.  It is often of great interest to optimize the training time for a given coherence time, since time in the training phase, while useful for parameter estimation, generally takes away from time in the data phase.

In a communication system, the parameters of interest often include the channel, which is typically unknown and learned at the receiver with the help of pilot signals sent by the transmitter.   For example, \cite{hassibi2003much} analyzes a multi-antenna model and a capacity lower-bound is obtained by using the minimum mean-square estimate (MMSE) of the channel, and the residual channel error is treated as Gaussian noise.  This lower bound is maximized over various parameters, including the fraction of the coherence time that should be dedicated to training.  A similar optimization is considered in \cite{muharar2020optimal}, where the power allocation and training duration are chosen to achieve the maximum sum-rate in a multiuser system.  Such ``one-shot learning", where the parameters are estimated only during the training phase, can be augmented by further refinement during the data phase \cite{takeuchi2012large,takeuchi2013achievable}. However, this refinement can suffer from error propagation \cite{miridakis2016joint}, and we do not consider this herein. 

Many of these previous efforts to analyze training assume that the unknown parameters appear linearly in the system model \cite{hassibi2003much,muharar2020optimal,takeuchi2012large,takeuchi2013achievable,sheng2017optimal}, or appear in a linearized version of the model \cite{li2016much,li2017channel}, often by employing the Bussgang decomposition \cite{bussgang1952crosscorrelation}.  We develop a framework to analyze one-shot training that does not require the parameters to appear linearly in the model, nor does it require additive Gaussian noise; rather, it requires the system to be time-invariant and memoryless, 
and a certain entropy to be computed in the large-scale system limit.
Herein, large-scale refers to long block lengths (time durations) or large dimensional inputs and outputs, or both.
The fact that the large-scale system entropy can sometimes be computed even when the small-scale system entropy cannot is exploited for our training analysis.  

\subsection{Problem setup and statement}

Consider a system model that has input and output processes as $\txA=(\txv_1,\txv_2,\ldots)$ and $\rxA=(\rxv_{1},\rxv_{2},\ldots)$, which comprise vectors $\txv_t$ and $\rxv_{t}$ whose dimensions are $\tn$ and $\rn$ respectively. The input and output are connected through a conditional distribution parameterized by $\chv_{\Tb}$, whose value is unknown.  We assume that $\chv_{\Tb}$ is constant during a coherence time block $\Tb$, and then changes independently in the next block (same length $\Tb$), and so on. 
The system is supplied with known inputs during a ``training phase" to learn the parameters, after which the system is used during its ``data phase".   The unknown parameters are assumed to have a known distribution, and the number of unknown parameters is allowed to be a function of $\Tb$.  {\em Problem statement:} We wish to determine the optimum amount of training.

To analyze the effects of training, a lower bound on the mutual information between the input and output is often used
\begin{align}
    \frac{1}{\Tb}\MuI(\txm_\Tb; \rxm_\Tb) \ge \frac{\Tb - \Tt}{\Tb}\MuI(\txv_{\Tt+1};\rxv_{\Tt+1}|\txm_{\Tt},\rxm_{\Tt}),
    \numberthis
    \label{eq:general_lower_bound}
\end{align}
where $\txv_{t}$ and $\rxv_{t}$ are the $t$th vector input and output of the system,  $\txm_{t}=[\txv_{1},\cdots,\txv_{t}]$, $\rxm_{t}=[\rxv_{1},\cdots,\rxv_{t}]$, and $\Tt$ is the number of training symbols in one coherence block.  We assume $0<\tau<1$ is the fraction of the blocklength devoted to training, and $\Tt$ is integer for convenience.  (We choose this in favor of using $\ceil{\Tt}$ throughout.)  The optimal training fraction, in the sense of maximizing this lower bound, is then
\begin{align}
    \tauopt = \argmax_{\tau}\;(1-\tau)\MuI(\txv_{\Tt+1};\rxv_{\Tt+1}|\txm_{\Tt},\rxm_{\Tt}).
    \numberthis 
    \label{eq:optimal_training_time_finite}
\end{align}
Such an analysis appears, for example, in \cite{hassibi2003much,muharar2020optimal,li2016much,li2017channel}, but the right-hand side of \eqref{eq:optimal_training_time_finite} can be difficult to compute and is itself often approximated or lower bounded.  For example, in \cite{hassibi2003much}, a wireless communication system with Rayleigh block-fading channel and additive Gaussian noise is considered, and the mutual information in \eqref{eq:optimal_training_time_finite} is lower bounded by treating the estimation error of the MMSE estimate of the channel as independent additive Gaussian noise.  However, this form of analysis is often intractable when the parameters appear nonlinearly, or the additive noise is non-Gaussian, since explicit estimates of the unknown parameters are unavailable.


By considering a large-scale limit of the conditional mutual information in \eqref{eq:optimal_training_time_finite}, we provide a method to revisit this computation.  Let $\Tb\to\infty$, and define the ratios
\begin{align}
    \ratio=\frac{\rn}{\tn},\quad \beta=\frac{\Tb}{\tn}.
    \label{eq:ratios}
\end{align}
It is possible, although not required, that $\tn$ and $\rn$ also grow to infinity with $\Tb$, so that $\beta$ is finite.  The large-scale limit of the conditional mutual information $\MuI(\txv_{\Tt+1};\rxv_{\Tt+1}|\txm_{\Tt},\rxm_{\Tt})$ in \eqref{eq:optimal_training_time_finite} is
\begin{align}
    \cIInn{\txA}{\rxA}{}=\lim_{\Tlim\to\infty}\frac{1}{\rn}\MuI(\txv_{\Tt+1};\rxv_{\Tt+1}|\txm_{\Tt},\rxm_{\Tt}).
    \label{eq:MuI_def_high}
\end{align}
The normalization by $1/\rn$ is needed to keep this quantity finite if $\rn\to\infty$, and this limit (assuming that it exists) typically depends on $\ratio$, $\beta$, and $\tau$.  The optimal training time in \eqref{eq:optimal_training_time_finite} then becomes 
\begin{align}
    \tauopt=\argmax_{\tau}\ (1-\tau) \cII{\txA}{\rxA}{},
    \label{eq:opt_training_tau}
\end{align}
and the corresponding optimal rate becomes
\begin{equation*}
    \Ropt = (1-\tauopt)\cII{\txA}{\rxA}{}\bigr|_{\tau=\tauopt}.
\end{equation*}
The value of this analysis depends on our ability to compute $\cII{\txA}{\rxA}{}$, and we show that this quantity can be computed as the derivative of a certain entropy.

\section{Main Results}

\subsection{Assumptions and definitions of useful quantities}
Before we introduce the main results, we first make some assumptions and definitions. The bound in \eqref{eq:general_lower_bound} is fully determined by the distribution of the triple $(\txm_{\Tb},\rxm_{\Tb},\chv_{\Tb})$,  and we make the following assumption:
\begin{align*}
    \text{A1:}\;\;& p(\rxm_{\Tb}|\txm_{\Tb};\chv_{{\Tb}}) =\prod_{t=1}^{\Tb}p(\rxv_t|\txv_t;\chv_{{\Tb}}),
    \numberthis
    \label{eq:fixed_channel_in_data} \\
   & p(\txm_\Tb)=p(\txm_{\Tt})\prod_{t=\Tt+1}^\Tb p(\txv_t),
   \numberthis
    \label{eq:iid_data_input_distribution}
\end{align*}
where $p(\rxv_t|\txv_t;\chv_{\Tb})$ is a fixed conditional distribution for all $t=1,2,\ldots,\Tb$ and $p(\txv_t)$ is a fixed distribution for all $t=\Tt+1,\Tt+2,\ldots,\Tb$.

Equation \eqref{eq:fixed_channel_in_data} says that the system is memoryless and time-invariant (given the input and parameters) and \eqref{eq:iid_data_input_distribution} says that the input $\txv_t$ is \iid and independent of $\txm_\Tt$ for all $t>\Tt$.  
We use the common convention of writing $p(\txm_{\Tt})$ and $p(\txv_t)$ when we mean $p_{\txm_{\Tt}}(\cdot)$ and $p_{\txv_t}(\cdot)$, even though these functions can differ. 
Under A1, the distributions of $(\txm_{\Tb},\rxm_{\Tb},\chv_{{\Tb}})$ are described by the set of known distributions
\begin{equation}
\cP(\Tb,\tau)=\{p(\rxv|\txv;\chv_{{\Tb}})),p(\chv_{{\Tb}})),p(\txm_\Tt),p(\txv_{\Tt+1})\}.
\label{eq:P_set_high}
\end{equation}
These distributions are used to calculate all of the entropies and mutual informations throughout.  The entropies and mutual informations are ``ergodic" in the sense that they are averaged over independent realizations of $\chv_{{\Tb}}$.

Define:
\begin{align*}
\EntInnC{\txA}{\rxA}{\dx}{\dy} =\lim_{\Tlim\to\infty}\frac{1}{\rn}\Ent(\rxv_{{\dy\Tt}+1}|\txm_{{\dx\Tt}},\rxm_{{\dy\Tt}}),
    \numberthis
    \label{eq:EntInn_cond_def_high}
\end{align*}
\begin{align*}
\EntInnC{\txA}{\rxA}{\dx^+}{\dy}
=\lim_{\Tlim\to\infty}&\frac{1}{\rn}\Ent(\rxv_{{\dy\Tt}+1}|\txm_{{\dx\Tt}+1},\rxm_{{\dy\Tt}}),
    \numberthis
    \label{eq:EntInn_cond_def_one_extra_high}
\end{align*}
with $\dx,\dy\in[0,\frac{1}{\tau})$, again assuming these limits exist. Notice that here we treat $\dx\Tt, \dy\Tt$ again as integers to avoid excessive use of the ceiling or floor notation.  We drop the subscripts $\dy$ and $\dx$ in $\EntInnC{\txA}{\rxA}{\dx}{\dy}$ and $\EntInnC{\txA}{\rxA}{\dx^+}{\dy}$ when $\dy=1$ or $\dx=1$.  For example, $\EntInnC{\txA}{\rxA}{}{}$ = $\EntInnC{\txA}{\rxA}{\dx}{\dy}|_{\dx=1, \dy =1}$ and $\EntInnC{\txA}{\rxA}{^+}{}$ = $\EntInnC{\txA}{\rxA}{\dx^+}{\dy}|_{\dx=1, \dy =1}$. 
With \eqref{eq:EntInn_cond_def_high} and \eqref{eq:EntInn_cond_def_one_extra_high}, we further make the following assumptions:
\begin{align*}
\text{A2:}&\qquad\quad \EntInnC{\txA}{\rxA}{^+}{}  = \lim_{\dy\downto 1} \EntInnC{\txA}{\rxA}{\dy^+}{\dy},
    \numberthis
    \label{eq:training_phase_continue}\\
&\qquad\quad \EntInnC{\txA}{\rxA}{}{} =\lim_{\dy\downto 1} \EntInnC{\txA}{\rxA}{}{\dy}.
    \numberthis
    \label{eq:continue_in_data}
\end{align*}


Assumptions A1--A2 in \eqref{eq:fixed_channel_in_data}, \eqref{eq:iid_data_input_distribution},  \eqref{eq:training_phase_continue}, and  \eqref{eq:continue_in_data},  are important for the main result (Theorem \ref{thm:MI_equal_gap&computation_of_MuI_from_derivative}). 
A1 is often met in practice for a memoryless and time-invariant system with \iid input in the data phase, independent of the input and output during training. However, we do not have a complete characterization of the processes $\txA$ and $\rxA$ that meet Assumptions A2.  Nevertheless, A2 may be verified on a case-by-case basis by examining expressions of  $\EntInnC{\txA}{\rxA}{\dy^+}{\dy}$ and $\EntInnC{\txA}{\rxA}{}{\dy}$ with $\dy\geq 1$ using Corollary \ref{cor:A1_derivative_relation}(c) in Appendix \ref{app:proof_MI_equal_gap&computation_of_MuI_from_derivative}; see the Example.


Define
\begin{equation}
    \EntR(\rxA_\dy|\txA_\dx)=\lim_{\Tlim\to\infty}\frac{1}{\rn\Tt}\Ent(\rxm_{{\dy\Tt}}|\txm_{{\dx\Tt}}),
    \label{eq:EntR_cond_def_high}
\end{equation}
\begin{align*}
  \cII{\txA_\dy}{\rxA_\dy}{}= \lim_{\Tlim\to\infty}&\frac{1}{\rn}\MuI(\txv_{{\dy\Tt}+1};\rxv_{{\dy\Tt}+1}|\txm_{{\dy\Tt}},\rxm_{{\dy\Tt}}),
  \numberthis
  \label{eq:cI_def_with_P_tau_high}
\end{align*}
\begin{align*}
    \cIRA{\txA_\dy}{\rxA_\dy}{}=\lim_{\Tlim\to\infty}&\frac{1}{\rn\Tb}\MuI(\txm^{{\dy\Tt}+1};\rxm^{{\dy\Tt}+1}|\txm_{{\dy\Tt}},\rxm_{{\dy\Tt}}),
    \numberthis
    \label{eq:joint_MuI_thm_high}
\end{align*}
where $\txm^t = [\txv_t,\txv_{t+1},\cdots,\txv_{\Tb}]$, and $\rxm^t=[\rxv_t,\rxv_{t+1},\cdots,\rxv_{\Tb}]$.  Similarly, we drop the subscripts $\dy$ and $\dx$ in $\cII{\txA_\dy}{\rxA_\dy}{}$, $\cIRA{\txA_\dy}{\rxA_\dy}{}$,  and $\EntR(\rxA_\dy|\txA_\dx)$ when $\dy=1$ or $\dx=1$.

\subsection{Main result}
\begin{thm}
\label{thm:MI_equal_gap&computation_of_MuI_from_derivative}
Under Assumption A1, 
\begin{align*}
    \cIRA{\txA_{0}}{\rxA_{0}}{}&\geq\;(1-\tau)\cII{\txA}{\rxA}{}
    \numberthis
    \label{eq:MuI_begin_lower_bound}\\
    &\geq\; (1-\tau)\lim_{\Tlim\to\infty}\MuI(\txv_{\Tt+1};\rxv_{ \Tt+1}|\hat{\chv}_{\Tb}),
    \numberthis
    \label{eq:MuI_estimate_lower_bound}
\end{align*}
where $\hat{\chv}_{\Tb}$ is any estimate of $\chv_{{\Tb}}$ that is a function of $(\txm_{ \Tt},\rxm_{ \Tt})$.  When A2 is also met,
\begin{equation}
    \cIInn{\txA}{\rxA}{} = \lim_{\dy\downto 1}\frac{\partial\EntR(\rxA_\dy|\txA)}{\partial\dy} - \lim_{\dy\downto 1} \frac{\partial\EntR(\rxA_\dy|\txA_\dy)}{\partial\dy}.
    \label{eq:MuI_from_derivative}
\end{equation}

\end{thm}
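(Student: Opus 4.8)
The plan is to handle the two displayed parts of the theorem by different means: the chain \eqref{eq:MuI_begin_lower_bound}--\eqref{eq:MuI_estimate_lower_bound} is obtained from generic chain-rule and data-processing manipulations together with the memoryless, \iid{} structure of A1, while the derivative identity \eqref{eq:MuI_from_derivative} comes from expressing each one-sided $\dy$-derivative of a block conditional entropy as a per-symbol conditional entropy and then matching boundary values through A2.

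For \eqref{eq:MuI_begin_lower_bound} I would first re-establish \eqref{eq:general_lower_bound} (or simply invoke it). Writing $S=(\txm_{\Tt},\rxm_{\Tt})$, the chain rule gives $\MuI(\txm_{\Tb};\rxm_{\Tb})=\MuI(\txm_{\Tt};\rxm_{\Tb})+\MuI(\txm^{\Tt+1};\rxm_{\Tb}\mid\txm_{\Tt})\ge\MuI(\txm^{\Tt+1};\rxm_{\Tb}\mid\txm_{\Tt})$, and, splitting $\rxm_{\Tb}=[\rxm_{\Tt},\rxm^{\Tt+1}]$ and again dropping a nonnegative term, $\MuI(\txm^{\Tt+1};\rxm_{\Tb}\mid\txm_{\Tt})\ge\MuI(\txm^{\Tt+1};\rxm^{\Tt+1}\mid S)$. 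Chaining over the data inputs and dropping the other outputs, $\MuI(\txm^{\Tt+1};\rxm^{\Tt+1}\mid S)=\sum_{t=\Tt+1}^{\Tb}\MuI(\txv_t;\rxm^{\Tt+1}\mid\txv_{\Tt+1},\ldots,\txv_{t-1},S)\ge\sum_{t=\Tt+1}^{\Tb}\MuI(\txv_t;\rxv_t\mid\txv_{\Tt+1},\ldots,\txv_{t-1},S)$. By A1, $\txv_t$ is independent of $(\txv_{\Tt+1},\ldots,\txv_{t-1},S)$, and $p(\rxv_t\mid\txv_t,\txv_{\Tt+1},\ldots,\txv_{t-1},S)=\int p(\rxv_t\mid\txv_t;\chv_{\Tb})\,p(\chv_{\Tb}\mid S)\,d\chv_{\Tb}=p(\rxv_t\mid\txv_t,S)$ because $\chv_{\Tb}$ is independent of the data inputs; hence each summand equals $\MuI(\txv_{\Tt+1};\rxv_{\Tt+1}\mid S)$, and the sum is $(\Tb-\Tt)\MuI(\txv_{\Tt+1};\rxv_{\Tt+1}\mid S)$. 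Dividing by $\rn\Tb$, sending $\Tb\to\infty$, and identifying the limits with $\cIRA{\txA_0}{\rxA_0}{}$ (that is, \eqref{eq:joint_MuI_thm_high} at $\dy=0$, where the conditioning is empty and $\txm^1=\txm_{\Tb}$) and $(1-\tau)\cII{\txA}{\rxA}{}$ yields \eqref{eq:MuI_begin_lower_bound}. For \eqref{eq:MuI_estimate_lower_bound}, $\hat{\chv}_{\Tb}$ is a function of $S$ and, by A1, $\txv_{\Tt+1}\perp S$, hence $\txv_{\Tt+1}\perp\hat{\chv}_{\Tb}$; then $\MuI(\txv_{\Tt+1};\rxv_{\Tt+1}\mid S)=\MuI(\txv_{\Tt+1};\rxv_{\Tt+1},S)=\MuI(\txv_{\Tt+1};\rxv_{\Tt+1},S\mid\hat{\chv}_{\Tb})\ge\MuI(\txv_{\Tt+1};\rxv_{\Tt+1}\mid\hat{\chv}_{\Tb})$, where the first equality is $\txv_{\Tt+1}\perp S$, the second uses the bijection between $(\rxv_{\Tt+1},S)$ and $(\rxv_{\Tt+1},S,\hat{\chv}_{\Tb})$ together with $\txv_{\Tt+1}\perp\hat{\chv}_{\Tb}$, and the last is data processing; passing to the large-scale limit gives \eqref{eq:MuI_estimate_lower_bound}.

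For \eqref{eq:MuI_from_derivative} I would first establish, for $\dy>1$, the two derivative--entropy relations
\begin{gather*}
\frac{\partial}{\partial\dy}\EntR(\rxA_\dy|\txA)=\EntInnC{\txA}{\rxA}{}{\dy},\\
\frac{\partial}{\partial\dy}\EntR(\rxA_\dy|\txA_\dy)=\EntInnC{\txA}{\rxA}{\dy^+}{\dy},
\end{gather*}
which I expect are the content of the auxiliary corollary cited in the appendix. The idea behind the first is that increasing $\dy$ appends roughly $\Tt\,d\dy$ output symbols $\rxv_{\dy\Tt+1},\rxv_{\dy\Tt+2},\ldots$ while the conditioning inputs stay at $\txm_{\Tt}$, so by the chain rule the increment of $\Ent(\rxm_{\dy\Tt}\mid\txm_{\Tt})$ is $\sum_j\Ent(\rxv_{\dy\Tt+j}\mid\rxm_{\dy\Tt+j-1},\txm_{\Tt})$, and in the $\Tt\to\infty$ limit each term is the same quantity $\rn\,\EntInnC{\txA}{\rxA}{}{\dy}$; dividing by $\rn\Tt$ gives the derivative. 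For the second relation, increasing $\dy$ appends both inputs and outputs, so one first uses A1 to replace $\Ent(\rxm_{\dy\Tt}\mid\txm_{\dy\Tt+k})$ by $\Ent(\rxm_{\dy\Tt}\mid\txm_{\dy\Tt})$ (the appended future data inputs being independent of $\rxm_{\dy\Tt}$ given $\txm_{\dy\Tt}$) and then the same chain-rule/stationarity argument identifies the derivative with $\EntInnC{\txA}{\rxA}{\dy^+}{\dy}$. Letting $\dy\downto1$ and applying A2 (\eqref{eq:continue_in_data} and \eqref{eq:training_phase_continue}) converts the right-hand side of \eqref{eq:MuI_from_derivative} into $\EntInnC{\txA}{\rxA}{}{}-\EntInnC{\txA}{\rxA}{^+}{}$. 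Finally, since $\txm_{\Tt+1}=[\txm_{\Tt},\txv_{\Tt+1}]$, this difference equals $\lim_{\Tlim\to\infty}\tfrac1{\rn}\bigl[\Ent(\rxv_{\Tt+1}\mid\txm_{\Tt},\rxm_{\Tt})-\Ent(\rxv_{\Tt+1}\mid\txv_{\Tt+1},\txm_{\Tt},\rxm_{\Tt})\bigr]=\lim_{\Tlim\to\infty}\tfrac1{\rn}\MuI(\txv_{\Tt+1};\rxv_{\Tt+1}\mid\txm_{\Tt},\rxm_{\Tt})=\cIInn{\txA}{\rxA}{}$, which is \eqref{eq:MuI_from_derivative}.

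The step I expect to be the main obstacle is making the two derivative--entropy relations rigorous: this requires interchanging the $\Tt\to\infty$ limit defining $\EntR$ with the $\dy$-derivative, and justifying that the sum of the $\approx\Tt\,d\dy$ appended per-symbol conditional entropies is asymptotically $\Tt\,d\dy$ times a single one, a stationarity/uniformity property that A1 alone does not imply. This is exactly where A2 and the auxiliary corollary do their work and where the analytic care resides; by contrast, the inequality chain \eqref{eq:MuI_begin_lower_bound}--\eqref{eq:MuI_estimate_lower_bound} is bookkeeping with the chain rule and the data-processing inequality, and passing the finite-$\Tb$ inequalities to the assumed limits is routine.
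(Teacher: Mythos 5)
Your proposal is sound and reaches all three claims, but the route to \eqref{eq:MuI_begin_lower_bound} is genuinely different from the paper's. The paper expands $\MuI(\txm^{\Tt+1};\rxm^{\Tt+1}|\txm_{\Tt},\rxm_{\Tt})$ as a chain of entropy differences, lower-bounds it by $\sum_{t}\MuI(\txv_t;\rxv_t|\txm_{t-1},\rxm_{t-1})$ (each term conditioned on \emph{all} past inputs and outputs), and then proves a separate monotonicity lemma --- that $\MuI(\txv_t;\rxv_t|\txm_{t-1},\rxm_{t-1})$ is nondecreasing in $t$ under A1 --- in order to bound every summand below by the $t=\Tt+1$ term. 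You instead chain over the data-phase inputs only, drop the extraneous outputs, and observe that because $p(\chv_{\Tb}\mid \txv_{\Tt+1},\dots,\txv_{t-1},\txm_{\Tt},\rxm_{\Tt})=p(\chv_{\Tb}\mid\txm_{\Tt},\rxm_{\Tt})$ and the data inputs are \iid, every summand \emph{equals} $\MuI(\txv_{\Tt+1};\rxv_{\Tt+1}|\txm_{\Tt},\rxm_{\Tt})$. That yields the same finite-$\Tb$ bound $(\Tb-\Tt)\MuI(\txv_{\Tt+1};\rxv_{\Tt+1}|\txm_{\Tt},\rxm_{\Tt})$ with one fewer lemma; what it forgoes is the paper's side observation that the per-symbol rate improves through the data phase as the accumulating outputs refine the channel knowledge, which is of independent interest but not needed for the theorem. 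Your derivation of \eqref{eq:MuI_estimate_lower_bound} is the paper's argument rewritten in mutual-information rather than entropy notation.

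For \eqref{eq:MuI_from_derivative} you follow the paper's route: the two derivative--entropy identities, then A2, then the identification of $\EntInnC{\txA}{\rxA}{}{}-\EntInnC{\txA}{\rxA}{^+}{}$ with $\cIInn{\txA}{\rxA}{}$. One place your sketch is looser than the paper: you justify $\partial\EntR(\rxA_\dy|\txA)/\partial\dy=\EntInnC{\txA}{\rxA}{}{\dy}$ by asserting that the roughly $\kappa\Tt$ appended per-symbol conditional entropies are ``each the same quantity'' in the limit. They are not; the paper's actual mechanism (Theorem 3 and Corollary 1) is that under A1 the map $t\mapsto\Ent(\rxv_{t+1}|\txm_{\Tt},\rxm_{t})$ is \emph{monotone} (conditioning reduces entropy), so the difference quotient of $\EntR$ is squeezed between the two endpoint per-symbol entropies, and the assumed existence of the derivative forces both bounds to converge to it. You correctly flag this interchange as the main obstacle and defer it to the cited corollary, so this is an anticipated gap in rigor rather than a flaw in approach; but note that the required uniformity comes from this monotonicity under A1, not from A2 --- A2 enters only afterwards, to replace $\EntInnC{\txA}{\rxA}{}{}$ and $\EntInnC{\txA}{\rxA}{^+}{}$ by the one-sided limits as $\dy\downto 1$ so that the derivatives taken at $\dy>1$ can be used.
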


\begin{proof}
Please see Appendix \ref{app:proof_MI_equal_gap&computation_of_MuI_from_derivative}.
\end{proof}

Hence, the mutual information limit with one-shot learning $(1-\tau)\cIInn{\txA}{\rxA}{}$ is a lower bound of the mutual information without any training, and is an upper bound of the mutual information with any estimate of the unknown parameters.  The expression in \eqref{eq:MuI_begin_lower_bound} can be calculated as a derivative using \eqref{eq:MuI_from_derivative} as long as $\EntR(\rxA_\dy|\txA_\dx)$ is available.
The next section shows how $\EntR(\rxA_\dy|\txA_\dx)$ may be computed, and the section that follows provides operational significance to \eqref{eq:MuI_begin_lower_bound} in the form of a channel coding theorem.  An example application of the theorem appears in Section~\ref{sec:step_by_step_tauopt}.

\subsection{Computation of $\EntR(\rxA_\dy|\txA_\dx)$}

An expression for $\EntR(\rxA_\dy|\txA_\dx)$ may be derived from $\EntR(\rxA_\dy|\txA)$ when this latter quantity is available.  In some cases $\EntR(\rxA_\dy|\txA)$ can be obtained through methods employed in statistical mechanics by treating the conditional entropy as free energy in a large-scale system.  Free energy is a fundamental quantity \cite{castellani2005spin,mezard2009information} that has been analyzed through the powerful ``replica method", and this, in turn, has been applied to entropy calculations in 
machine learning \cite{engel2001statistical,opper1996statistical,shinzato2008learning,ha1993generalization} and wireless communications \cite{wen2015performance,wen2015joint,wen2016bayes}, in both linear and nonlinear systems.  

The entropy $\EntR(\rxA|\txA)$ (equivalent to $\EntR(\rxA_\dy|\txA)$ where $\dy=1$) is considered in \cite{engel2001statistical,opper1996statistical,ha1993generalization,shinzato2008learning}, where the input is multiplied by an unknown vector as an inner product and then passes through a nonlinearity to generate a scalar output. In \cite{engel2001statistical,opper1996statistical,ha1993generalization}, the inputs are \iid, while orthogonal inputs are considered in \cite{shinzato2008learning}. 
The entropy $\EntR(\rxA_{\dy}|\txA)$ for MIMO systems is considered in \cite{wen2015performance,wen2015joint,wen2016bayes}, where the inputs are \iid in the training phase and are \iid in the data phase, but the distributions in the two phases can differ. In \cite{wen2015performance}, a linear system is considered where the output is the result of the input multiplied by an unknown matrix, plus additive noise, while in \cite{wen2015joint,wen2016bayes} uniform quantization is added at the output.

As we now show, the expression for $\EntR(\rxA_{\dy}|\txA)$ for $\dy\geq 1$ can be leveraged to compute $\EntR(\rxA_\dy|\txA_\dx)$ for all $\dy,\dx>0$.  We consider the case when the input $\txv_t$ are \iid for all $t$, and the distribution set $\cP(\Tb,\tau)$ defined in \eqref{eq:P_set_high} can therefore be simplified as
\begin{align}
    \cP(\Tb,\tau)=\{p(\rxv|\txv;\chva{\Tb}),p(\chva{\Tb}),p(\txv)\}.
    \label{eq:P_set_high_iid_input}
\end{align}
The following theorem assumes that we have $\EntR(\rxA_\dy|\txA)$ available as a function of $(\tau,\dy)$ for all $\dy\geq 1$.

\begin{thm}
\label{thm:scale_joint_entropy}
Assume that Assumption A1 is met, $\txv_t$ are \iid for all $t$, $\EntR(\rxA_\dy|\txA)$ exists and is continuous in $\tau$ and $\dy$ for $\tau\in(0,1)$ and $\dy\in(0,\frac{1}{\tau}]$. Define
\begin{align}
    F(\tau,\dy)=\EntR(\rxA_\dy|\txA),
    \label{eq:entropy_generator}
\end{align}
where $\dy\geq 1$ and $\EntR(\rxA_\dy|\txA)$ is defined in \eqref{eq:EntR_cond_def_high}.
Then
\begin{align*}
    \EntR(\rxA_\dy|\txA_{\dx})=
      u\cdot F\left(u\tau,\frac{\dy-u}{\dx}+1\right), 
    \numberthis
    \label{eq:EntR_rx_gamma_tx_lambda}
\end{align*}
for all $\dy,\dx\in(0,\frac{1}{\tau}]$, where $u=\min(\dy,\dx)$.
\end{thm}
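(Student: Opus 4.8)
The plan is to reduce the general conditional entropy $\EntR(\rxA_\dy|\txA_{\dx})$ to the already-available entropy generator $F(\tau,\dy)=\EntR(\rxA_\dy|\txA)$ by two elementary manipulations, each of which is an exact identity at finite $\Tb$: (i) deleting conditioning on inputs that occur strictly after the last observed output; and (ii) relabeling the first $u\Tt$ time slots, with $u=\min(\dy,\dx)$, as the training block of a system with training fraction $u\tau$, which is legitimate precisely because the input $\txv_t$ is \iid for \emph{all} $t$ under \eqref{eq:P_set_high_iid_input}.

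First I would establish a causality lemma: under \eqref{eq:fixed_channel_in_data} and the a priori independence of $\chv_{\Tb}$ from the inputs, for any integers $s\le m\le\Tb$ one has $p(\rxm_{s}|\txm_{m})=p(\rxm_{s}|\txm_{s})$, hence $\Ent(\rxm_{s}|\txm_{m})=\Ent(\rxm_{s}|\txm_{s})$. This follows from $p(\rxm_{s}|\txm_{m})=\int p(\chv_{\Tb}|\txm_{m})\,p(\rxm_{s}|\txm_{m},\chv_{\Tb})\,d\chv_{\Tb}$ together with $p(\chv_{\Tb}|\txm_{m})=p(\chv_{\Tb})$ (no output is conditioned on) and the memoryless factorization $p(\rxm_{s}|\txm_{m},\chv_{\Tb})=\prod_{t=1}^{s}p(\rxv_{t}|\txv_{t};\chv_{\Tb})$, which depends on $\txm_{m}$ only through $\txm_{s}$. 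Applying this with $s=u\Tt$, $m=\dx\Tt$ gives, in every case, $\Ent(\rxm_{\dy\Tt}|\txm_{\dx\Tt})=\Ent(\rxm_{\dy\Tt}|\txm_{u\Tt})$ --- trivially when $\dx\le\dy$ (then $u\Tt=\dx\Tt$) and by the lemma when $\dx>\dy$.

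Next I would reinterpret $\tfrac{1}{\rn\Tt}\Ent(\rxm_{\dy\Tt}|\txm_{u\Tt})$. Because the inputs are \iid for all $t$ and \eqref{eq:fixed_channel_in_data}--\eqref{eq:iid_data_input_distribution} hold with the simplified set \eqref{eq:P_set_high_iid_input}, the joint law of $(\txm_{\Tb},\rxm_{\Tb},\chv_{\Tb})$ is unchanged if we declare the first $u\Tt$ slots to be the training block; this is exactly a system with training fraction $u\tau$ (so $u\Tt=(u\tau)\Tb$) in which $\txm_{u\Tt}$ is the full training input and $\rxm_{\dy\Tt}$ plays the role of $\rxm_{\dy'(u\Tt)}$ with $\dy'=\dy/u$. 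Since $u\in\{\dy,\dx\}$, a one-line check gives $\dy/u=\tfrac{\dy-u}{\dx}+1$, and $\dy'\ge 1$ because $u\le\dy$, so $F$ is evaluated only where it is available. By \eqref{eq:EntR_cond_def_high} and \eqref{eq:entropy_generator},
\[
\frac{1}{\rn\,u\Tt}\,\Ent(\rxm_{\dy\Tt}|\txm_{u\Tt})\;\longrightarrow\;F\!\left(u\tau,\ \tfrac{\dy-u}{\dx}+1\right)\qquad\text{as }\Tlim\to\infty,
\]
using $u\tau\le 1$ (from $\dx,\dy\le1/\tau$) and, at the endpoint $u\tau=1$, the assumed continuity of $F$; the ratios $\ratio,\beta$ are untouched by the relabeling. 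Multiplying by $u\Tt/\Tt=u$ and letting $\Tlim\to\infty$ yields \eqref{eq:EntR_rx_gamma_tx_lambda}.

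I expect step (ii) to be the crux: one must argue that moving the training/data boundary leaves the pertinent entropy --- and, importantly, its large-scale limit --- invariant. This is where the hypothesis that $\txv_t$ is \iid for \emph{all} $t$ (rather than merely within each phase, as in the references that supply $F$) is essential; otherwise the slots indexed between $u\Tt$ and $\Tt$ would carry the data-phase law inside the relabeled system, which would then not be the one generating $F$. Two routine consistency checks close the argument: the regimes $\dx\le\dy$ and $\dx>\dy$ agree on the overlap $\dx=\dy$ (both give $\dy\,F(\dy\tau,1)$), and the $\Tlim\to\infty$ limit commutes with (i)--(ii) because each is an exact identity at finite $\Tb$.
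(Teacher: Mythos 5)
Your proposal is correct and follows essentially the same route as the paper: drop the conditioning on inputs occurring after the last observed output (your causality lemma, which the paper asserts without the explicit $p(\chv_{\Tb}|\txm_{m})=p(\chv_{\Tb})$ argument), then rescale time so that the first $u\Tt$ slots become the training block of a system with fraction $u\tau$, which is the paper's change of variable $\tilde{\tau}=u\tau$. Your added emphasis on why the \iid-for-all-$t$ hypothesis makes the relabeling legitimate is a useful elaboration but not a different proof.
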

\begin{proof}
According to  \eqref{eq:EntR_cond_def_high} and \eqref{eq:entropy_generator}, we have
\begin{align*}
    F(\tau,\dy)=\lim_{\Tlim\to\infty}\frac{1}{\rn\Tt}\Ent(\rxm_{{\dy\Tt}}|\txm_{\Tt}),
\end{align*}
which is computed using $\cP(\Tb,\tau)$ defined in \eqref{eq:P_set_high_iid_input}.  When $\dx\geq\dy>0$, we have
\begin{align}
    &\EntR(\rxA_\dy|\txA_\dx)=\lim_{\Tlim\to\infty}\frac{1}{\rn\Tt}\Ent(\rxm_{{\dy\Tt}}|\txm_{{\dx\Tt}}) \label{eq:Hdef}\\
    =&\lim_{\Tlim\to\infty}\frac{1}{\rn\Tt}\Ent(\rxm_{{\dy\Tt}}|\txm_{{\dy\Tt}})=\lim_{{\Tlim}\to\infty}\frac{\dy}{\rn\tilde{\tau}{\Tb}}\Ent(\rxm_{\tilde{\tau}{\Tb}}|\txm_{\tilde{\tau}{\Tb}}).
    \label{eq:change_of_variable}
\end{align}
where $\tilde{\tau}={\dy\tau}$.
Therefore, \eqref{eq:entropy_generator} and \eqref{eq:change_of_variable} yield
\begin{align*}
   \EntR(\rxA_\dy|\txA_\dx) = \dy\cdot F(\tilde{\tau},1)=\dy\cdot F(\dy\tau,1).
   \numberthis
   \label{eq:expand_1}
\end{align*}
When $\dy>\dx> 0$, let $\tilde{\tau}={\dx\tau}$, and then \eqref{eq:Hdef} yields
\begin{align*}
   \EntR(\rxA_\dy|\txA_\dx) &=\lim_{{\Tlim}\to\infty}\frac{\dx}{\rn\tilde{\tau}{\Tb}}\Ent(\rxm_{\dy\tilde{\tau}{\Tb}/\dx}|\txm_{\tilde{\tau}{\Tb}})\\
    &=\dx\cdot F\left(\tilde{\tau},\frac{\dy}{\dx}\right) = \dx\cdot F\left(\dx{\tau},\frac{\dy}{\dx}\right) .
    \numberthis
   \label{eq:expand_2}
\end{align*}
By combining \eqref{eq:expand_1} and \eqref{eq:expand_2}, we obtain \eqref{eq:EntR_rx_gamma_tx_lambda}.
\end{proof}

\subsection{Channel coding theorem}
\label{subsec:channel_coding_thm_SISO}
We now provide an operational description of the mutual information inequality \eqref{eq:MuI_begin_lower_bound}. 
We consider a communication system where the channel is constant for blocklength $\Tb$, and then changes independently and stays constant for another blocklength, and so on. The first $\Tt$ symbols of each block are used for training with known input and output.
Under Assumption A1, the communication system is memoryless, is time-invariant within each block, and the input is \iid independent of $\txm_{\Tt}$ after training.  The system is retrained with every block, and the message to be transmitted is encoded over the data phase of multiple blocks. 

A $(2^{nR\rn\Tb },n,\Tb)$-code for a block-constant channel with blocklength $\Tb$ is defined as an encoder that maps a message $S\in\{1,2,\ldots,2^{nR\rn\Tb}\}$ to the input in the data phase $\txm^{\Tt+1}$ among $n$ blocks,
and a decoder that maps $\txm_{\Tt}$, and the entire output $\rxm_{\Tb}$ for $n$ blocks to $\hat{S}\in\{1,2,\ldots,2^{nR\rn\Tb}\}$, where $\rn = \frac{\alpha\Tb}{\beta}$. The code rate $R$ has units ``bits per transmission per receiver", and the maximum probability of error of the code is defined as
\begin{equation}
    \Pe(n,\Tb)=\max_{S}\Pr(\hat{S}\neq S).
    \label{eq:max_prob_err}
\end{equation}
The channel coding theorem is shown below.
\begin{thm}
\label{thm:channel_coding_thm_inf_Tb}
Assume A1 is met, with a channel that is constant with blocklength $\Tb$, whose conditional distribution is parameterized by $\chva{\Tb}$ and is independent of the input.  If $\cII{\txA}{\rxA}{}$ exists, then for every $R$ that satisfies
\begin{equation*}
    R<(1-\tau)\cII{\txA}{\rxA}{},
\end{equation*}
there exists $\Tb_0>0$, so that for all $\Tb>\Tb_0$, we can find a code $(2^{nR\rn\Tb },n,\Tb)$ with maximum probability of error $\Pe(n,\Tb)\to 0$ as $n\to\infty$.
\end{thm}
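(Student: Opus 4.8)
Here is the plan I would follow.

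The plan is to view one coherence block as a single use of a stationary memoryless ``super-channel'' and then invoke the classical achievability theorem for memoryless channels; the large-scale limit enters only to pin down the block length. First I would choose $\Tb$. Since $\cII{\txA}{\rxA}{}$ exists and $R<(1-\tau)\cII{\txA}{\rxA}{}$, the definition \eqref{eq:cI_def_with_P_tau_high} provides a $\Tb_0$ such that, for every $\Tb>\Tb_0$ (with the associated $\rn=\alpha\Tb/\beta$),
\[
(1-\tau)\,\frac{1}{\rn}\,\MuI(\txv_{\Tt+1};\rxv_{\Tt+1}|\txm_{\Tt},\rxm_{\Tt})>R .
\]
I fix $\Tb$ at any such value, so $\tn$ and $\rn$ are finite and the only remaining limit is $n\to\infty$. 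Applying the chain rule to the data inputs $\txm^{\Tt+1}=[\txv_{\Tt+1},\dots,\txv_{\Tb}]$ and using A1 — the data inputs already peeled off by the chain rule are, by \eqref{eq:iid_data_input_distribution}, independent of $(\txm_{\Tt},\rxm_{\Tt},\txv_t,\rxv_t)$ and so may be dropped from the conditioning, while $p(\rxv|\txv;\chva{\Tb})$ is time-invariant by \eqref{eq:fixed_channel_in_data} — one obtains $\MuI(\txm^{\Tt+1};\rxm^{\Tt+1}|\txm_{\Tt},\rxm_{\Tt})\ge(\Tb-\Tt)\,\MuI(\txv_{\Tt+1};\rxv_{\Tt+1}|\txm_{\Tt},\rxm_{\Tt})$, essentially the argument behind \eqref{eq:general_lower_bound}. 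Combining, $\MuI(\txm^{\Tt+1};\rxm^{\Tt+1}|\txm_{\Tt},\rxm_{\Tt})>R\,\rn\Tb$.

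Next I would build the super-channel. Because $\chva{\Tb}$ is drawn afresh and independently for each block and, given it, the within-block channel is memoryless and time-invariant (A1), the map carrying a block's data input $\txm^{\Tt+1}$ (the prescribed training prefix $\txm_{\Tt}$ being a fixed vector known at both ends) to that block's entire output $\rxm_{\Tb}=[\rxm_{\Tt},\rxm^{\Tt+1}]$ is a stationary memoryless channel, one use per coherence block, with distinct blocks conditionally independent given their data inputs. Driving it with the data-phase product law $\prod_{t=\Tt+1}^{\Tb}p(\txv_t)$ of A1, its single-letter mutual information is $\MuI(\txm^{\Tt+1};\rxm_{\Tb}|\txm_{\Tt})=\MuI(\txm^{\Tt+1};\rxm^{\Tt+1}|\txm_{\Tt},\rxm_{\Tt})$, since $\txm^{\Tt+1}$ is independent of $\rxm_{\Tt}$ given $\txm_{\Tt}$ (the training output depends only on the fixed training input, the channel, and the noise, none of which carries the message). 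By the channel coding theorem for memoryless channels — I would cite Feinstein's lemma, since it directly controls the \emph{maximum} error probability — for every rate $R'$ in bits per block below this single-letter mutual information there is a sequence of $n$-block codes whose maximum error probability tends to $0$. Choosing $R'=R\,\rn\Tb$, admissible by the first paragraph, the resulting codes are precisely $(2^{nR\rn\Tb},n,\Tb)$-codes with $\Pe(n,\Tb)\to0$, as asserted.

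I expect the delicate point to be the super-channel construction: one must confirm that the blocks really are conditionally independent uses of a single fixed channel (which rests on the independence of the channel realizations across blocks together with A1) and invoke an achievability theorem general enough for the (possibly continuous-valued, non-Gaussian, and high-dimensional) per-block channel. The general-alphabet achievability from Feinstein's lemma — equivalently, the weak law of large numbers for the \iid information-density sequence of the super-channel — does this, and needs only the relevant single-letter mutual information to be finite, which holds because $\cII{\txA}{\rxA}{}$ is assumed to exist as a finite limit, so that $\MuI(\txv_{\Tt+1};\rxv_{\Tt+1}|\txm_{\Tt},\rxm_{\Tt})<\infty$ for the fixed $\Tb$. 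The per-symbol-to-per-block inequality and the choice of $\Tb$ are then routine.
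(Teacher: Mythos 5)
Your proposal is correct and follows essentially the same route as the paper: fix $\Tb$ large enough (via the existence of the limit $\cII{\txA}{\rxA}{}$) so that the per-block mutual information of the block-level ``super-channel'' exceeds $R\rn\Tb$, reduce $\MuI(\txm^{\Tt+1};\rxm_{\Tb}|\txm_{\Tt})$ to $\MuI(\txm^{\Tt+1};\rxm^{\Tt+1}|\txm_{\Tt},\rxm_{\Tt})$ by independence, lower-bound it by $(\Tb-\Tt)\MuI(\txv_{\Tt+1};\rxv_{\Tt+1}|\txm_{\Tt},\rxm_{\Tt})$ using A1, and invoke the classical memoryless-channel achievability result over $n$ blocks. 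Your explicit appeal to Feinstein's lemma for maximum-error-probability control on general alphabets is a slightly more careful citation than the paper's generic reference to the classical coding theorem, but the argument is the same.
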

\begin{proof}
Define 
\begin{equation*}
    \cR_{\Tb}=\frac{1}{\Tb\rn}\MuI(\txm^{\Tt+1};\rxm_{\Tb}|\txm_{\Tt}).
\end{equation*}
For any finite $\Tb$, according to the classical channel coding theorem \cite{cover2012elements,yeung2008information,effros2010generalizing}, for every $R<\cR_{\Tb}$, there exists a code $(2^{n R \rn \Tb},n,\Tb)$ with maximum probability of error $\Pe(n,\Tb)\to 0$ as $n\tendsto\infty$. 

It is clear that $\txm^{\Tt+1}$ is independent of $(\txm_{\Tt},\rxm_{\Tt})$. Therefore, we have
\begin{equation*}
    \cR_{\Tb}=\frac{1}{\Tb\rn}\MuI(\txm^{\Tt+1};\rxm^{\Tt+1}|\txm_{\Tt},\rxm_{\Tt}).
\end{equation*}

Since $\txv_{\Tt+1},\txv_{\Tt+2},\ldots,\txv_{\Tb}$ are \iid, and $p(\rxv_t|\txv_t;\chva{\Tb})$ is a fixed conditional distribution for all $t=1,2,\ldots,\Tb$, we have 
\begin{equation}
    \cR_{\Tb}\geq \frac{(1-\tau)}{\rn}\MuI(\txv_{\Tt+1};\rxv_{\Tt+1}|\txm_{\Tt},\rxm_{\Tt}).
    \label{eq:bound_of_R}
\end{equation}
According to the definition in \eqref{eq:cI_def_with_P_tau_high},
\begin{align*}
     \cII{\txA}{\rxA}{}=\lim_{\Tlim\to\infty} \frac{1}{\rn} \MuI(\txv_{\Tt+1};\rxv_{\Tt+1}|\txm_{\Tt},\rxm_{\Tt}).
\end{align*}
Therefore, for any $\kappa>0$, there exists a number $\Tb_0>0$ so that when $\Tb>\Tb_0$, we have
\begin{align*}
    \frac{1}{\rn}\MuI(\txv_{\Tt+1};\rxv_{\Tt+1}|\txm_{\Tt},\rxm_{\Tt})>\cII{\txA}{\rxA}{} - \kappa,
\end{align*}
and \eqref{eq:bound_of_R} yields
\begin{align*}
    \cR_{\Tb}>(1-\tau)(\cII{\txA}{\rxA}{} - \kappa),
\end{align*}
which means any rate $R\leq(1-\tau)(\cII{\txA}{\rxA}{} - \kappa)$ is achievable. 

By taking the limit $\kappa\downto 0$, we finish the proof.
\end{proof} 

This theorem shows that rates below $(1-\tau)\cII{\txA}{\rxA}{}$ are achievable when $\Tb$ is chosen large enough.  Only an achievability statement is given here since $(1-\tau)\cII{\txA}{\rxA}{}$ is a lower bound on $\cR_{\Tb}$ for large $\Tb$. 

\section{Steps for Computing Optimal Training Time and an Example}
\label{sec:step_by_step_tauopt}
We summarize the process to compute the optimal training time $\tauopt$ for a memoryless, time-invariant system with unknown parameters.  We assume that the input dimension $\tn$, the output dimension $\rn$, and the coherence time (block of symbols) $\Tb$ have the ratios defined in \eqref{eq:ratios}. The unknown parameters of the system are constant within the block, and change independently in the next block.  The first $\Tt$ symbols of each block are used for training and the remaining $\Tb-\Tt$ are for data.  We assume $\Tlim\to \infty$, and solve \eqref{eq:opt_training_tau} as an approximation of \eqref{eq:optimal_training_time_finite}.  The input $\txv_t$ are \iid for all $t=1,\dots,\Tb$. 

The process includes the following seven steps:
\begin{itemize}
    \item [1)] Verify Assumption A1 \eqref{eq:fixed_channel_in_data}--\eqref{eq:iid_data_input_distribution} based on the set of distributions in \eqref{eq:P_set_high}.
    
    \item [2)] Compute $\EntR(\rxA_\dy|\txA)$ defined in \eqref{eq:EntR_cond_def_high} for $\dy \ge 1$ based on \eqref{eq:P_set_high}, and express it as a function of $\tau$ and $\dy$ as $F(\tau,\dy)$ \eqref{eq:entropy_generator}.
    
    \item [3)] Compute $\EntR(\rxA_\dy|\txA_\dx)$ defined in \eqref{eq:EntR_cond_def_high}, for all $\dy, \dx \in (0, \frac{1}{\tau}]$ by using Theorem~\ref{thm:scale_joint_entropy} and $F(\tau,\dy)$.
    
    \item [4)] Compute $\EntInnC{\txA}{\rxA}{\dx}{\dy}$ and $\EntInnC{\txA}{\rxA}{\dy^+}{\dy}$ defined in \eqref{eq:EntInn_cond_def_high}-\eqref{eq:EntInn_cond_def_one_extra_high} by taking the derivative of $\EntR(\rxA_\dy|\txA_\dx)$ and $\EntR(\rxA_\dy|\txA_\dy)$ (Corollary \ref{cor:A1_derivative_relation}(a)--(b) in Appendix).
    
    \item [5)] Verify Assumption A2 \eqref{eq:training_phase_continue} by examining the expressions of $\EntInnC{\txA}{\rxA}{\dy^+}{\dy}$ and verify \eqref{eq:continue_in_data} with Corollary \ref{cor:A1_derivative_relation}(c).
    
    \item [6)] Compute $\cIInn{\txA}{\rxA}{}$ by using \eqref{eq:MuI_from_derivative}.

    \item [7)] Solve $\tauopt$ using \eqref{eq:opt_training_tau}.
    
\end{itemize}
    

    
    
    


The following simple example applies these steps.
\subsection*{Example: Bit flipping through random channels}
\noindent
Let 
\begin{equation}
    \rx_{t}=\tx_{t}\oplus \chs_{k_t},\quad t=1,\ldots,\Tb,
    \label{eq:binary_XOR_channel}
\end{equation}
where, since $\tn=\rn=1$, the binary input $\tx_t$ and output $\rx_t$ are scalars, and $\tx_{t}$ is XOR'ed with a random bit $\chs_{k_t}$ intended to model the unknown ``state" of the channel $k_t$.  Thus, each channel either lets the input bit directly through, or inverts it.  The $\tx_{t}$ are \iid equally likely to be zero or one, Bernoulli($\frac{1}{2}$) random variables.  Let $a>0$ be a parameter, where $a\cdot\Tb$ is the (integer) number of unique channels whose states are stored in the vector $\chv_{\Tb}=[\chs_{1},\chs_{2},\cdots,\chs_{a\cdot\Tb}]^\Tp$ comprising \iid Bernoulli($\frac{1}{2}$) random variables that are independent of the input. The channel selections $\bk_{\Tb}=[k_1,\ldots,k_{\Tb}]$ are chosen as an \iid uniform sample from $\{1,2,\cdots,a\cdot\Tb\}$ (with possible repetitions), and the choices are known to the receiver.  We wish to send training signals through these channels to learn $\chv_{\Tb}$; the more entries of this vector that we learn, the more channels become useful for sending data, but the less time we have to send data before the blocklength $\Tb$ runs out and $\chv_{\Tb}$ changes.
We want to determine the optimum $\tau$ as $\Tlim\to\infty$ using \eqref{eq:opt_training_tau}. We therefore follow the steps above.

\begin{itemize}

\item [1)] From \eqref{eq:binary_XOR_channel}, we have
\begin{align*}
    p(\rxv_{\Tb}|\txv_{\Tb};\chva{\Tb}) =\prod_{t=1}^{\Tb}p(\rx_t|\tx_t;\chva{\Tb}),
\end{align*}
where $p(\rx_t|\tx_t;\chva{\Tb})=\mathbbm{1}_{(\rx_t=\tx_{t}\oplus \chs_{k_t})}$ for all $t$.  Here the notation is slightly abused, since now $\tn=\rn=1$, we use $\rxv_{t}$ and $\txv_{t}$ to denote $[\rx_1,\dots,\rx_{t}]^\Tp$ and $[\tx_1,\dots,\tx_{t}]^\Tp$.
It is clear that Assumption A1 is met and $\tx_t$ are \iid for all $t$ independent of $\chv_{\Tb}$.

\item [2)]
By definition, $\EntR(\rxA|\txA)=\lim\limits_{\Tlim\to\infty}\frac{1}{\Tt}\Ent(\rxv_{\Tt}|\txv_{\Tt})$. 
The model \eqref{eq:binary_XOR_channel} yields
\begin{align*}
    \Ent(\rxv_{\Tt}|\txv_{\Tt}) &\stackrel{(\rm a)}{=}\Ent(\{\chs_{k_1},\ldots,\chs_{k_\Tt}\}|\txv_{\Tt})
    \stackrel{(\rm b)}{=}\Ent(\{\chs_{k_1},\ldots,\chs_{k_\Tt}\})
    \stackrel{(\rm c)}{=}\E_{\bk_{\Tt}}|A_\Tt| \\
    &\stackrel{(\rm d)}{=}\sum_{i=1}^{a\Tb}\E(\mathbbm{1}_{(i\in A_{\Tt})})   ={a\Tb}(1-(1-\frac{1}{a\Tb})^{\Tt}).
\end{align*}
where $A_{\Tt} = \{k_1,\ldots,k_\Tt\}$, $^{(a)}$ uses $\chs_{k_t}=\rx_t\oplus\tx_t$, $^{(b)}$ uses the independence between $\txv_{\Tt}$ and $\chs_t$, $^{(c)}$ uses the independence between $\chs_t$, $\chs_k$ when $t\neq k$ and $^{(d)}$ uses $|A_{\Tt}|=\sum_{i=1}^{a\Tb} \mathbbm{1}_{(i\in A_{\Tt})}$, where $\mathbbm{1}_{(\cdot)}$ is the indicator function.
Therefore,
\begin{align*}
    &\EntR(\rxA|\txA)=\lim_{\Tlim\to\infty}\frac{a}{\tau}(1-(1-\frac{1}{a\Tb})^{\Tt})=\frac{a}{\tau}(1-e^{-\frac{\tau}{a}}).
    \numberthis
    \label{eq:XOR_tau_1}
\end{align*}
By the chain rule for entropy, for $\dy > 1$, we have
\begin{align*}
    \EntR&(\rxA_\dy|\txA)=\EntR(\rxA|\txA) + \lim_{\Tlim\to\infty}\frac{1}{\Tt}\Ent(\rx_{\Tt+1},\rx_{\Tt+2},\ldots,\rx_{{\dy\Tt}}|\txv_{\Tt},\rxv_{\Tt}).
\end{align*}
Since
\begin{align*}
   {\dy\Tt} - \Tt\geq&\Ent(\rx_{\Tt+1},\rx_{\Tt+2},\ldots,\rx_{{\dy\Tt}}|\txv_{\Tt},\rxv_{\Tt})\\
    \ge& \Ent(\tx_{\Tt+1},\tx_{\Tt+2},\ldots,\tx_{{\dy\Tt}}|\txv_{\Tt},\rxv_{\Tt},\chva{\Tb})\\
    =&\Ent(\tx_{\Tt+1},\tx_{\Tt+2},\ldots,\tx_{{\dy\Tt}})={\dy\Tt} - \Tt,
\end{align*}
we conclude that 
\begin{align*}
    F(\tau,\dy)=\EntR&(\rxA_\dy|\txA)=\frac{a}{\tau}(1-e^{-\frac{\tau}{a}})+\dy-1.
    \numberthis
    \label{eq:tau_XOR_EntR}
\end{align*}

\item [3)]
Theorem \ref{thm:scale_joint_entropy} yields
\begin{align*}
    \EntR(\rxA_\dy|\txA_\dx) =  \begin{cases} 
      \frac{a}{\tau}(1-e^{-\frac{\tau}{a}\dy}), & \dy\leq\dx; \\
      \frac{a}{\tau}(1-e^{-\frac{\tau}{a}\dx})+(\dy-\dx), & \dx<\dy,
      \end{cases}
\end{align*}
for $\dy,\dx\in(0,\frac{1}{\tau})$.

\item[4)]
Then, Corollary \ref{cor:A1_derivative_relation}(a)-(b) yields
\begin{align*}
    \EntInnC{\txA}{\rxA}{\dy^+}{\dy} =\frac{\partial\EntR(\rxA_\dy|\txA_\dy)}{\partial\dy}=e^{-\frac{\tau}{a}\dy},
\end{align*}
\begin{align*}
    \EntInnC{\txA}{\rxA}{\dx}{\dy} =\frac{\partial\EntR(\rxA_\dy|\txA_\dx)}{\partial\dy}=  \begin{cases}
      e^{-\frac{\tau}{a}\dy}, & \dy<\dx; \\
      1, & \dy>\dx.
      \end{cases}
\end{align*}

\item[5)]
$\EntInnC{\txA}{\rxA}{\dy^+}{\dy}$ and Corollary \ref{cor:A1_derivative_relation}(c) allow us to conclude that Assumption A2 also holds.

\item[6)]
From Theorem \ref{thm:MI_equal_gap&computation_of_MuI_from_derivative}, we obtain
\begin{align*}
    \cII{\txA}{\rxA}{}
    =&\;1-e^{-\frac{\tau}{a}},
\end{align*}
\item [7)]
Finally, \eqref{eq:opt_training_tau} yields \begin{equation}
    \tauopt=\argmax\limits_{\tau}(1-\tau)(1-e^{-\frac{\tau}{a}}),
    \label{eq:Ex5tauopt_def}
\end{equation}
or
\begin{equation*}
    \tauopt=\begin{cases} 
      -a\ln a, & a\to 0; \\
      \frac{1}{2}, & a\to\infty;\\
      \frac{1}{e}, & a=\frac{1}{e}.
      \end{cases}
\end{equation*}
\end{itemize}
When $a$ is small, $\tauopt$ is larger than $a$; when $a$ is large, $\tauopt$ saturates at $\frac{1}{2}$; and $a=\frac{1}{e}$ is the dividing line between $\tauopt>a$ and $\tauopt<a$.  The corresponding rates are
\begin{equation*}
    \Ropt=\begin{cases} 
      (1+a\ln a)(1-a), & a\to 0; \\
      \frac{1}{2}(1-e^{-\frac{1}{2a}}), & a\to\infty;\\
      (1-\frac{1}{e})^2, & a=\frac{1}{e}.
      \end{cases}
\end{equation*}
The optimum fraction of the blocklength $\Tb$ that should be devoted to training varies as a function of the number of possible unique channels.  When $a=1$, the number of unique channels equals $\Tb$, and the $\tauopt\approx 0.44$.  For a large number of unique channels relative to the blocklength ($a\tendsto\infty$), the fraction of the training time saturates at $1/2$.  When $a$ is small, the optimum fraction of the blocklength devoted to training decreases to zero, but more slowly than $a$.

In this example, a traditional finite-system information-theoretic analysis and simulation is possible (these calculations are omitted). Figure \ref{fig:example_compare_triditional_and_derivative} shows the results as a function of $\Tb$, where we can see that as $\Tb$ grows, the resulting $\tauopt$ quickly approaches the large-system results.  The fact that we can use a large-system limit to approximate a finite-system limit is important when applying the Theorems in realistic scenarios.

\begin{figure}
\includegraphics[width=3.8in]{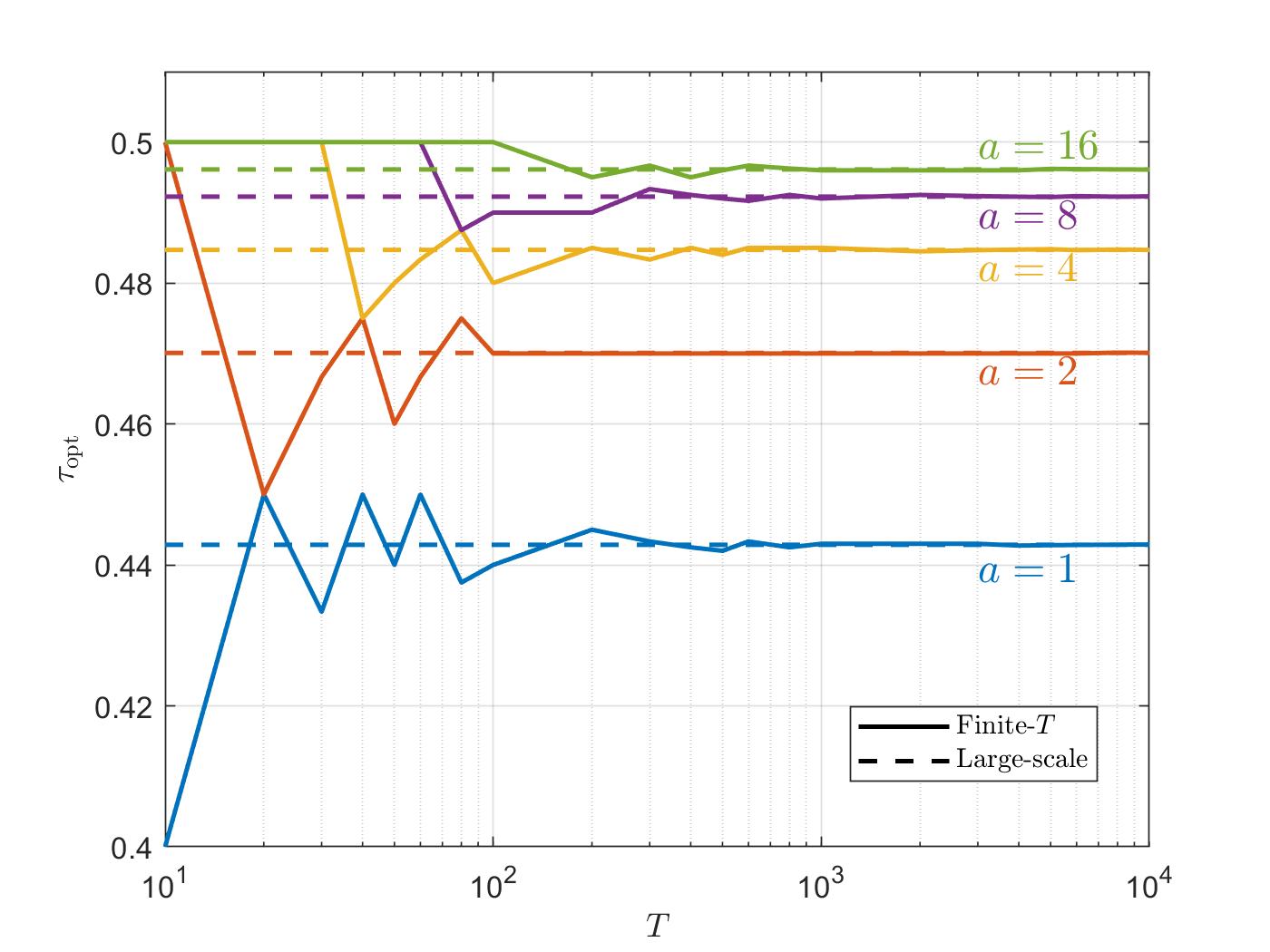}
\centering
    \caption{Optimal training time $\tauopt$ vs blocklength $\Tb$, derived from finite-$\Tb$ analysis (solid curves) and the proposed large-scale method (dashed curves).  The large-scale analysis \eqref{eq:Ex5tauopt_def} shows excellent agreement with the finite-dimensional analysis (which is omitted) for even small values of $\Tb$.}
    \label{fig:example_compare_triditional_and_derivative}
\end{figure}


\section{Discussion and Conclusion}
\label{sec:discussion}

\subsection{Number of unknowns and bilinear model}
In general, a finite number of unknowns in the model leads to uninteresting results as $\Tb\tendsto\infty$.  For example, consider a system modeled as
\begin{equation*}
    \rx_{t} = \chs\tx_t +v_t,\quad t=1,2,\ldots
\end{equation*}
where $\chs$ is the unknown gain of the system, $\tx_t$, $\rx_t$ are the input and corresponding output, $v_t$ is the additive noise, $\tau$ is the fraction of time used for training.  This system is bilinear in the gain and the input.  We assume that $\ns_t$ is modeled as \iid Gaussian $\cN(0,1)$, independent of the input. The training signals are $\tx_t=1$ for all $t=1,2,\ldots, \Tt$, and the data signals $\tx_t$ are modeled as \iid Gaussian $\cN(0,1)$ for all $t=\Tt+1,\Tt+2,\ldots$ An analysis similar to the Example produces
\begin{align*}
    \cIRA{\txA_0}{\rxA_0}{} \geq\frac{1-\tau}{2}\E_{\chs}\log(1+\chs^2),
\end{align*}
and therefore $\tauopt=0$ maximizes this bound. This result reflects the fact that $\chs$ is learned perfectly for any $\tau>0$ because there is only one unknown parameter for $\Tt$ training symbols as $\Tt\tendsto\infty$.  Hence, trivially, it is advantageous to make $\tau$ as small as possible.

More interesting is the ``large-scale" model
\begin{align*}
    \rxv_t=\bff(\chm\txv_t+\nv_t),\quad t=1,2,\ldots,
    \numberthis
    \label{eq:nonlinear_system_model}
\end{align*}
where $\txv_t$ and $\rxv_t$ are the $t$th input and output vectors with dimension $\tn$ and $\rn$, $\chm$ is an $\rn\times\tn$ unknown random matrix that is not a function of $t$, $\nv_1,\nv_2,\ldots$ are \iid unknown vectors with dimension $\rn$ and known distribution (not necessarily Gaussian), and $\bff(\cdot)$ applies a possibly nonlinear function $f(\cdot)$ to each element of its input. The training interval $\Tt$ is used to learn $\chm$.  

Let $\tn$ and $\rn$ increase proportionally to the blocklength $\Tb$ with the ratios defined in \eqref{eq:ratios};
such a model can be used in large-scale wireless communication, signal processing, and machine learning applications. In wireless communication and signal processing \cite{takeuchi2010achievable,takeuchi2012large,hassibi2003much,takeuchi2013achievable,li2016much,li2017channel,wen2015performance,wen2015joint,wen2016bayes}, $\txv_t$ and $\rxv_t$ are the transmitted signal and the received signal at time $t$ in a multiple-input-multiple-output (MIMO) system with $\tn$ transmitters and $\rn$ receivers, $\chm$ models the channel coefficients between the transmitters and receivers, $\Tb$ is the coherence time during which the channel $\chm$ is constant, $\nv_t$ is the additive noise at time $t$, $f(\cdot)$ models receiver effects such as quantization in analog-to-digital converters and nonlinearities in amplifiers. A linear receiver, $f(x)=x$, is considered in \cite{takeuchi2010achievable,takeuchi2012large,takeuchi2013achievable,wen2015performance}. Single-bit ADC's with $f(x)=\sign(x)$ are considered in \cite{li2016much,li2017channel}, and low-resolution ADC's with $f(x)$ modeled as a uniform quantizer are considered in \cite{wen2015joint,wen2016bayes}.  The training and data signals can be chosen from different distributions, as in \cite{hassibi2003much,li2016much,li2017channel}.  Conversely, the training and data signals can both be \iid, as in \cite{takeuchi2013achievable,wen2015performance,wen2015joint,wen2016bayes}.


Let $\rn=1$.  In machine learning, \eqref{eq:nonlinear_system_model} is a model of a single layer neural network (perceptron) \cite{opper1996statistical,engel2001statistical,shinzato2008learning} and $\txv_t$ is the input to the perceptron with dimension $\tn$, $\rxv_t$ is the scalar decision variable at time $t$, $\chm$ holds the unknown weights of the perceptron, and $f(\cdot)$ is the nonlinear activation function. A perceptron is often used as a classifier, where the output of the perceptron is the class label of the corresponding input. In \cite{opper1996statistical,engel2001statistical}, \iid inputs are used to learn the weights, and orthogonal inputs are used in \cite{shinzato2008learning}. Binary class classifiers are considered in \cite{opper1996statistical,engel2001statistical,shinzato2008learning}.
Training employs $\Tt$ labeled input-output pairs $(\txv_t,\rxv_t)$, and the trained perceptron then classifies new inputs before it is retrained on a new dataset.  Generally, both the training and data are modeled as having the same distribution.

To obtain optimal training results for \eqref{eq:nonlinear_system_model}, Theorems \ref{thm:MI_equal_gap&computation_of_MuI_from_derivative}--\ref{thm:scale_joint_entropy} show that a starting point for computing
$\cII{\txA}{\rxA}{}$ is $\EntR(\rxA_\dy|\txA)$ for $\dy\geq 1$.  Fortunately, $\EntR(\rxA_\dy|\txA)$ results can sometimes be found in the existing literature; for example, in \cite{wen2015performance,wen2015joint,wen2016bayes}, $\EntR(\rxA_\dy|\txA)$ is used to calculate the mean-square error of the estimated input signal, conditioned on the training.  Our analysis shows how to leverage these same $\EntR(\rxA_\dy|\txA)$ results to derive the training-based mutual information.  

\subsection{Models for which assumptions are superfluous}

Assumption A2 is likely to be superfluous for certain common system models, such as when the distribution on $\txv_t$ is \iid through the training and data phases, and the transition probabilities can be written as a product as in Assumption A1. However, we have not yet characterized for which models A2 is automatically satisfied without additional assumptions on ${\cal H}'$, and think that this would be an interesting research topic for further work.

\appendices

\section{Proof of Theorem \ref{thm:MI_equal_gap&computation_of_MuI_from_derivative}}
\label{app:proof_MI_equal_gap&computation_of_MuI_from_derivative}

\subsection{Proof of the inequalities \eqref{eq:MuI_begin_lower_bound}--\eqref{eq:MuI_estimate_lower_bound}}
Under Assumption A1, we have
\begin{align*}
    &\MuI(\txm^{\Tt+1};\rxm_{\Tb}|\txm_\Tt)\\
    =&\MuI(\txm^{\Tt+1};\rxm_{\Tt}|\txm_\Tt)+\MuI(\txm^{\Tt+1};\rxm^{\Tt+1}|\txm_\Tt,\rxm_{\Tt})\\
    =&\MuI(\txm^{\Tt+1};\rxm^{\Tt+1}|\txm_\Tt,\rxm_{\Tt}),
\end{align*}
where the first equality uses the chain rule and the second uses that $\txm^{\Tt+1}$ is independent of $(\txm_\Tt,\rxm_\Tt)$. 
Moreover, 
\begin{align*}
    &\MuI(\txm^{\Tt+1};\rxm^{\Tt+1}|\txm_{\Tt},\rxm_{\Tt})\\
    \stackrel{}{=}&\;\Ent(\txm^{\Tt+1}|\txm_{\Tt},\rxm_{\Tt}) - \Ent(\txm^{\Tt+1}|\txm_{\Tt},\rxm_{\Tb})\\
    \stackrel{\rm{( a)}}{=}&\sum_{t={\Tt+1}}^{\Tb}(\Ent(\txv_t|\txm_{t-1},\rxm_{\Tt}) - \Ent(\txv_t|\txm_{t-1},\rxm_{\Tb}))\\
    \stackrel{\rm{( b)}}{\geq}&\sum_{t={\Tt+1}}^{\Tb}(\Ent(\txv_t|\txm_{t-1},\rxm_{t-1}) - \Ent(\txv_t|\txm_{t-1},\rxm_{t}))\\
    =& \sum_{t=\Tt+1}^{\Tb}\MuI(\txv_t;\rxv_t|\txm_{t-1},\rxm_{t-1}).
\end{align*}
Here, $^{(a)}$ uses the chain rule, $^{(b)}$ uses conditioning to reduce entropy. Equality in $^{(b)}$ can be achieved when $\chva{\Tb}$ is estimated perfectly from $(\txm_{\Tt},\rxm_{\Tt})$. 
Since Assumption A1 implies that $\txv_t$ is independent of $(\txv_k,\rxv_k)$ when $k\neq t$ and $t\geq \Tt+1$, for all $t\geq \Tt+1$, we have
\begin{align*}
    &\MuI(\txv_{t+1};\rxv_{t+1}|\txm_{t},\rxm_{t}) - \MuI(\txv_{t};\rxv_{t}|\txm_{t-1},\rxm_{t-1})\\
    \stackrel{\rm{( a)}}{=}&(\Ent(\txv_{t+1}) - \Ent(\txv_{t+1}|\txm_t,\rxm_{t+1})) \\
    &\qquad - (\Ent(\txv_{t}) - \Ent(\txv_{t}|\txm_{t-1},\rxm_{t}))\\
    =&\Ent(\txv_{t}|\txm_{t-1},\rxm_{t}) - \Ent(\txv_{t+1}|\txm_t,\rxm_{t+1})\\
    \stackrel{\rm{( b)}}{=}&\Ent(\txv_{t+1}|\txm_{t-1},\rxm_{t-1},\rxv_{t+1}) - \Ent(\txv_{t+1}|\txm_t,\rxm_{t+1})
    \stackrel{\rm{( c)}}{\geq} 0.
\end{align*}
Here, $^{(a)}$ uses the independence between $\txv_t$ and $(\txm_{t-1},\rxm_{t-1})$, $^{(b)}$ uses Assumption A1, $^{(c)}$ uses conditioning to reduce entropy. Thus, $\MuI(\txv_t;\rxv_t|\txm_{t-1},\rxm_{t-1})$ is monotonically increasing with $t$ for all $t>\Tt$. 
Then, in the limit when $\Tlim\to\infty$, we get  \eqref{eq:MuI_begin_lower_bound}.

Also, 
 \begin{align*}
     &\MuI(\txv_{\Tt+1};\rxv_{\Tt+1}|\txm_{\Tt},\rxm_{\Tt})\\
     =&\Ent(\txv_{\Tt+1})-\Ent(\txv_{\Tt+1}|\txm_{\Tt},\rxm_{\Tt},\rxv_{\Tt+1})\\
     \stackrel{\rm{( a)}}{=}&\Ent(\txv_{\Tt+1})-\Ent(\txv_{\Tt+1}|\txm_{\Tt},\rxm_{\Tt},\hat{\chv}_{\Tb},\rxv_{\Tt+1})\\
     \stackrel{\rm{( b)}}{\geq}&\Ent(\txv_{\Tt+1})-\Ent(\txv_{\Tt+1}|\hat{\chv}_{\Tb},\rxv_{\Tt+1})\\
     =&\MuI(\txv_{\Tt+1};\rxv_{\Tt+1}|\hat{\chv}_{\Tb}),
 \end{align*}
 where $^{(a)}$ uses that $\hat{\chv}_{\Tb}$ is a function of $(\txm_{\Tt},\rxm_{\Tt})$, and $^{(b)}$ uses conditioning to reduce entropy. By taking the limit $\Tlim\to \infty$, we have \eqref{eq:MuI_estimate_lower_bound}.

\subsection{Proof of  \eqref{eq:MuI_from_derivative}}
We first show the derivative relationship between $\EntInn(\rxA_\dy)$ and $\EntR(\rxA_\dy)$ defined below, and then generalize to the conditional entropies which directly lead to the conclusion \eqref{eq:MuI_from_derivative}.
Define
\begin{align*}
\EntInn(\rxA_\dy) =\lim_{\Tlim\to\infty}\Ent(\rxv_{\ceil{\dy\Tt}+1}|\rxm_{\ceil{\dy\Tt}}),
    \numberthis
    \label{eq:EntInn_def}
\end{align*}
\begin{equation}
    \EntR(\rxA_\dy)=\lim_{\Tlim\to\infty}\frac{1}{\Tt}\Ent(\rxm_{\ceil{\dy\Tt}}),
    \label{eq:EntR_def}
\end{equation}
which can be considered as  $\EntInnC{\txA}{\rxA}{\dx}{\dy}$ and $\EntR(\rxA_\dy|\txA_{\dx})$ with $\dx=0$. For mathematical rigorousness, we keep the $\ceil{\cdot}$ notation here.
We show that, under some conditions, 
 $\EntInn(\rxA_\dy)$ is the derivative of $\EntR(\rxA_\dy)$.
\begin{thm}
\label{thm:derivative_monotonic}
Suppose there exists a $\kappa>0$ so that $\Ent(\rxv_{t+1}|\rxm_{t})$ is monotonic in $t$ when $t\in[\floor{(\dy-\kappa)\Tt},\ceil{(\dy+\kappa)\Tt}]$ as $\Tlim\to\infty$. 

If $\EntR(\rxA_\dy)$ and its derivative with respect to $\dy$ exist, we have
\begin{equation}
    \EntInn(\rxA_\dy) =\frac{\partial\EntR(\rxA_\dy)}{\partial\dy}.
    \numberthis
    \label{eq:derivative_2}
\end{equation}
If both $\EntR(\rxA_\dy)$ and $\EntInn(\rxA_\dy)$ exist, and there exists a $c>0$ independent of $t$ and $\Tt$ so that $|\Ent(\rxv_{t+1}|\rxm_{t})|<c$,  we have
\begin{equation}
    \EntR(\rxA_\dy) = \int_{0}^{\dy}\EntInn(\rxA_u) d u .
    \label{eq:theorem_integral_monotonic_thm}
\end{equation}
\end{thm}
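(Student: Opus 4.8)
The plan is to treat $\EntR(\rxA_\dy)$, which is $\frac{1}{\Tt}\Ent(\rxm_{\ceil{\dy\Tt}})$ in the limit, as a discrete Riemann sum and recognize $\EntInn(\rxA_\dy)$ as the increment. First I would write the chain rule $\Ent(\rxm_{\ceil{\dy\Tt}}) = \sum_{t=0}^{\ceil{\dy\Tt}-1}\Ent(\rxv_{t+1}\mid\rxm_t)$, so that $\frac{1}{\Tt}\Ent(\rxm_{\ceil{\dy\Tt}})$ is exactly a left-endpoint Riemann sum with mesh $1/\Tt$ of the ``density'' $t\mapsto \Ent(\rxv_{t+1}\mid\rxm_t)$ sampled at $t/\Tt$. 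For the second identity \eqref{eq:theorem_integral_monotonic_thm}, the uniform bound $|\Ent(\rxv_{t+1}\mid\rxm_t)|<c$ means each term is $O(1/\Tt)$, and the sum stays bounded; the natural route is to show that the step function $g_\Tt(u) := \Ent(\rxv_{\ceil{u\Tt}+1}\mid\rxm_{\ceil{u\Tt}})$ converges pointwise to $\EntInn(\rxA_u)$ for each fixed $u$ (which is exactly the hypothesis that $\EntInn(\rxA_u)$ exists), is uniformly bounded by $c$, and hence by dominated convergence $\int_0^\dy g_\Tt(u)\,du \to \int_0^\dy \EntInn(\rxA_u)\,du$; meanwhile $\int_0^\dy g_\Tt(u)\,du$ differs from $\frac{1}{\Tt}\Ent(\rxm_{\ceil{\dy\Tt}})$ by at most one boundary term of size $O(c/\Tt)\to 0$, giving \eqref{eq:theorem_integral_monotonic_thm}.

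For the derivative identity \eqref{eq:derivative_2}, I would argue locally near a fixed $\dy$. By definition of the derivative, $\frac{\partial}{\partial\dy}\EntR(\rxA_\dy)$ equals $\lim_{\kappa\downto 0}\frac{\EntR(\rxA_{\dy+\kappa})-\EntR(\rxA_\dy)}{\kappa}$, and for small fixed $\kappa$ this difference quotient is
\begin{equation*}
\frac{1}{\kappa}\lim_{\Tlim\to\infty}\frac{1}{\Tt}\sum_{t=\ceil{\dy\Tt}}^{\ceil{(\dy+\kappa)\Tt}-1}\Ent(\rxv_{t+1}\mid\rxm_t),
\end{equation*}
i.e. the average of $\Ent(\rxv_{t+1}\mid\rxm_t)$ over the window $t\in[\ceil{\dy\Tt},\ceil{(\dy+\kappa)\Tt})$. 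Here is where the monotonicity hypothesis does the work: on the window $t\in[\floor{(\dy-\kappa)\Tt},\ceil{(\dy+\kappa)\Tt}]$ the sequence $\Ent(\rxv_{t+1}\mid\rxm_t)$ is monotone (as $\Tlim\to\infty$), so the windowed average is squeezed between its first and last values, both of which converge to $\EntInn(\rxA_\dy)$ as $\Tlim\to\infty$ and then $\kappa\downto 0$ (using that $\EntInn(\rxA_{\dy\pm\kappa})\to\EntInn(\rxA_\dy)$, which follows from the existence and differentiability of $\EntR$ together with what we have just shown). A symmetric argument with the left difference quotient over $t\in[\ceil{(\dy-\kappa)\Tt},\ceil{\dy\Tt})$ pins the two-sided derivative to the same value $\EntInn(\rxA_\dy)$.

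The main obstacle is the interchange of the two limits — the $\Tlim\to\infty$ limit (which appears inside the definitions of both $\EntR$ and $\EntInn$) and the $\kappa\downto 0$ limit that produces the derivative. A naive swap is not justified, and this is precisely why the local monotonicity assumption is imposed: it converts the windowed average into something trapped between two endpoint values, letting me take $\Tlim\to\infty$ first (endpoints converge to $\EntInn$ at nearby arguments) and $\kappa\downto 0$ second without ever needing uniformity in $t$. I would also need to handle the minor bookkeeping that $\ceil{\cdot}$ introduces an $O(1/\Tt)$ slack in the window endpoints, which is absorbed harmlessly in the limit. A secondary point to get right is the direction of monotonicity: since conditioning reduces entropy is not automatic here (the conditioning sets $\rxm_t$ are nested but the variable $\rxv_{t+1}$ changes), the theorem only assumes monotonicity rather than deriving it, so I must be careful to use it as a hypothesis and not silently assume decreasing behavior.
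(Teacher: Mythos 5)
Your plan is essentially the paper's proof: the chain rule turns $\EntR(\rxA_\dy)$ into a Riemann sum, local monotonicity traps the forward difference quotient below $\lim_{\Tlim\to\infty}\Ent(\rxv_{\ceil{\dy\Tt}+1}|\rxm_{\ceil{\dy\Tt}})$ and the backward difference quotient above it, and the assumed existence of $\partial\EntR(\rxA_\dy)/\partial\dy$ squeezes both to the derivative as $\kappa\downto 0$ (the paper declares \eqref{eq:theorem_integral_monotonic_thm} the ``integral equivalent'' and omits its proof, so your dominated-convergence argument is a reasonable way to supply it). The one step to drop is the appeal to $\EntInn(\rxA_{\dy\pm\kappa})\to\EntInn(\rxA_\dy)$, which is circular as you state it and also unnecessary: the one-sided bounds coming from the window endpoints nearest $\dy$ already close the sandwich without any continuity of $\EntInn$.
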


\begin{proof}
Equation \eqref{eq:theorem_integral_monotonic_thm} is an integral equivalent of \eqref{eq:derivative_2} and
we only prove \eqref{eq:derivative_2} for simplicity.  Without loss of generality, we assume that $\Ent(\rxv_{t+1}|\rxm_{t})$ is monotonically decreasing. Using the definition of $\EntR(\rxA_\dy)$ in \eqref{eq:EntR_def}, we have
\begin{align*}
    &\frac{1}{\kappa}(\EntR(\rxA_{\dy+\kappa})-\EntR(\rxA_\dy))\\
    =&\lim_{\Tlim\to\infty}\frac{\Ent(\rxm_{\ceil{(\dy+\kappa)\Tt}})-\Ent(\rxm_{\ceil{\dy\Tt}})}{\kappa\Tt}\\
    =&\lim_{\Tlim\to\infty}\frac{\sum_{t=\ceil{\dy\Tt}+1}^{\ceil{(\dy+\kappa)\Tt}}\Ent(\rxv_{t}|\rxm_{t-1})}{\kappa\Tt}\\
    \leq &\lim_{\Tlim\to\infty}\frac{(\kappa\Tt+1)\cdot\Ent(\rxv_{\ceil{\dy\Tt}+1}|\rxm_{\ceil{\dy\Tt}})}{\kappa\Tt}\\
    =&\lim_{\Tlim\to\infty}\Ent(\rxv_{\ceil{\dy\Tt}+1}|\rxm_{\ceil{\dy\Tt}}).
    \numberthis
    \label{eq:limit_upper_bound}
\end{align*}

Similarly to \eqref{eq:limit_upper_bound}, we also have
\begin{align*}
    &\lim_{\Tlim\to\infty}\Ent(\rxv_{\ceil{\dy\Tt}+1}|\rxm_{\ceil{\dy\Tt}})
    \leq\frac{1}{\kappa}(\EntR(\rxA_\dy)-\EntR(\rxA_{\dy-\kappa})).
    \numberthis
    \label{eq:limit_lower_bound}
\end{align*}
Let $\kappa\downto 0$ in both \eqref{eq:limit_upper_bound} and \eqref{eq:limit_lower_bound}; because we assume that the derivative of $\EntR(\rxA_\dy)$ exists, these limits both equal this derivative.  Then, the definition of $\EntInn(\rxA_\dy)$ in \eqref{eq:EntInn_def} yields \eqref{eq:derivative_2}.
\end{proof}

Theorem \ref{thm:derivative_monotonic} is a consequence of the entropy chain rule and letting an infinite sum converge to an integral (standard Riemann sum approximation).  Such an analysis has also been used in the context of computing mutual information; for example  \cite{shamai2001impact,guo2005randomly,guo2005mutual,guo2008multiuser,honig2009advances}. 
Theorem \ref{thm:derivative_monotonic}  can be generalized to include conditioning on $\txA$, thus leading to the following corollary, provided that $\EntR(\rxA_\dy|\txA_\dx)$ and its derivative with respect to $\dy$ exist.
\begin{cor}
\label{cor:A1_derivative_relation}
Assume A1 holds.  (a) For $\dy>1$,
\begin{align*}
    \EntInnC{\txA}{\rxA}{}{\dy} =\frac{\partial\EntR(\rxA_\dy|\txA)}{\partial\dy},
    \numberthis
    \label{eq:derivative_data}
\end{align*}
\begin{align*}
    \EntInnC{\txA}{\rxA}{\dy^+}{\dy} =\frac{\partial\EntR(\rxA_\dy|\txA_\dy)}{\partial\dy}.
    \numberthis
    \label{eq:derivative_training}
\end{align*}

(b) If $\txv_t$ are \iid for all $t$, then for all $\dy,\dx>0$ and $\dy\neq\dx$,
\begin{align*}
    \EntInnC{\txA}{\rxA}{\dx}{\dy} &=\frac{\partial\EntR(\rxA_\dy|\txA_\dx)}{\partial\dy},
    \numberthis
    \label{eq:derivative_more_iid_training} \\
    \EntInnC{\txA}{\rxA}{^+}{} &=\left.\frac{\partial\EntR(\rxA_\dy|\txA_\dy)}{\partial\dy}\right|_{\dy=1}.
    \numberthis
    \label{eq:derivative_training_0}
\end{align*}

(c) If
\begin{align}
    \lim_{\dy\downto 1}\EntInnC{\txA}{\rxA}{}{\dy}=\lim_{\dx\upto 1}\EntInnC{\txA}{\rxA}{\dx}{},
    \label{eq:two_bound_met}
\end{align}
then Assumption A2 \eqref{eq:continue_in_data} is met.
\end{cor}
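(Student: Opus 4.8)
The plan is to deduce all three parts from the ``chain rule plus Riemann sum'' mechanism already used for Theorem~\ref{thm:derivative_monotonic}, run with one extra conditioning variable, plus a short exchangeability argument for the monotonicity it requires; part~(c) will instead come from a squeeze.

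\emph{Conditional version of Theorem~\ref{thm:derivative_monotonic}.} The proof of Theorem~\ref{thm:derivative_monotonic} uses only two facts: the chain rule $\Ent(\rxm_{(\dy+\kappa)\Tt})-\Ent(\rxm_{\dy\Tt})=\sum_{t=\dy\Tt+1}^{(\dy+\kappa)\Tt}\Ent(\rxv_t\mid\rxm_{t-1})$, and monotonicity of the increments $\Ent(\rxv_t\mid\rxm_{t-1})$ over the window $t\in[\dy\Tt+1,(\dy+\kappa)\Tt]$, which lets one sandwich the difference quotient as in \eqref{eq:limit_upper_bound}--\eqref{eq:limit_lower_bound}; both survive if every entropy is additionally conditioned on a fixed sequence. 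For \eqref{eq:derivative_data} I would condition on $\txm_{\Tt}$ and for \eqref{eq:derivative_more_iid_training} on $\txm_{\dx\Tt}$, so the increment becomes $\Ent(\rxv_t\mid\rxm_{t-1},\txm_{\dx\Tt})$ and the derivative picks out $\tfrac1\rn\Ent(\rxv_{\dy\Tt+1}\mid\txm_{\dx\Tt},\rxm_{\dy\Tt})=\EntInnC{\txA}{\rxA}{\dx}{\dy}$, exactly \eqref{eq:EntInn_cond_def_high} (the ``$+1$'' index offset between this definition and the natural incremental term is absorbed by the same monotone sandwich, as in Theorem~\ref{thm:derivative_monotonic}). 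For the diagonal statements \eqref{eq:derivative_training} and \eqref{eq:derivative_training_0}, where $\EntR(\rxA_\dy\mid\txA_\dy)$ depends on $\dy$ through both arguments, I would first use A1 together with the independence of the inputs beyond index $\dy\Tt$ from $(\rxm_{\dy\Tt},\txm_{\dy\Tt})$ to get $\Ent(\rxm_{\dy\Tt}\mid\txm_{(\dy+\kappa)\Tt})=\Ent(\rxm_{\dy\Tt}\mid\txm_{\dy\Tt})$ and $\Ent(\rxv_t\mid\rxm_{t-1},\txm_{(\dy+\kappa)\Tt})=\Ent(\rxv_t\mid\rxm_{t-1},\txm_t)$; then differentiating the diagonal entropy produces $\tfrac1\rn\Ent(\rxv_{\dy\Tt+1}\mid\rxm_{\dy\Tt},\txm_{\dy\Tt+1})=\EntInnC{\txA}{\rxA}{\dy^+}{\dy}$, the ``$^+$'' recording that the new output arrives together with its own input.

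\emph{Verifying the monotonicity.} This is where A1 (and, for~(b), i.i.d.\ inputs throughout) is used. By A1 the data outputs $\rxv_{\Tt+1},\rxv_{\Tt+2},\dots$, and likewise the pairs $(\txv_{\Tt+1},\rxv_{\Tt+1}),(\txv_{\Tt+2},\rxv_{\Tt+2}),\dots$, are i.i.d.\ given $\chv_{\Tb}$, hence exchangeable given $(\txm_{\Tt},\rxm_{\Tt})$ as a mixture over $\chv_{\Tb}$ of i.i.d.\ sequences. For an exchangeable $\{Z_j\}$, dropping $Z_1$ and relabeling gives $\Ent(Z_{n+1}\mid Z_1,\dots,Z_n)\le\Ent(Z_{n+1}\mid Z_2,\dots,Z_n)=\Ent(Z_n\mid Z_1,\dots,Z_{n-1})$, so the per-step conditional entropy is non-increasing; taking $Z_j=\rxv_{\Tt+j}$ gives monotonicity of $\Ent(\rxv_{t+1}\mid\rxm_t,\txm_{\Tt})$ and $Z_j=(\txv_{\Tt+j},\rxv_{\Tt+j})$ gives monotonicity of $\Ent(\rxv_{t+1}\mid\rxm_t,\txm_{t+1})$, for $t>\Tt$. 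Since $\dy>1$ in~(a), the whole window stays in the data phase, so the mechanism above yields \eqref{eq:derivative_data}--\eqref{eq:derivative_training}. For~(b), i.i.d.\ inputs for all $t$ erase the training/data boundary, so the same exchangeability (now of the entire input--output process, conditioned on whatever prefix is held fixed) holds for windows anywhere in $(0,1/\tau]$, giving \eqref{eq:derivative_more_iid_training} for every $\dy\neq\dx$ and \eqref{eq:derivative_training_0} at $\dy=1$ — the exclusion $\dy\neq\dx$ (in the non-``$^+$'' case) marking exactly where the increment can jump. Alternatively, \eqref{eq:derivative_more_iid_training}--\eqref{eq:derivative_training_0} can be read off by combining this mechanism with the closed form of $\EntR(\rxA_\dy\mid\txA_\dx)$ from Theorem~\ref{thm:scale_joint_entropy}.

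\emph{Part~(c) by a squeeze, and the main obstacle.} For~(c) no differentiation is needed. Conditioning reduces entropy, so $\Ent(\rxv_{\dy\Tt+1}\mid\txm_{\Tt},\rxm_{\dy\Tt})\le\Ent(\rxv_{\dy\Tt+1}\mid\txm_{\Tt},\rxm_{\Tt})$; since $(\rxv_{\dy\Tt+1},\txm_{\Tt},\rxm_{\Tt})$ and $(\rxv_{\Tt+1},\txm_{\Tt},\rxm_{\Tt})$ share the same law (a fresh i.i.d.\ data input through a time-invariant kernel), the right side equals $\Ent(\rxv_{\Tt+1}\mid\txm_{\Tt},\rxm_{\Tt})$, so dividing by $\rn$, sending $\Tlim\to\infty$ and then $\dy\downto1$ gives $\lim_{\dy\downto1}\EntInnC{\txA}{\rxA}{}{\dy}\le\EntInnC{\txA}{\rxA}{}{}$. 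Symmetrically, for $\dx\le1$, $\Ent(\rxv_{\Tt+1}\mid\txm_{\dx\Tt},\rxm_{\Tt})\ge\Ent(\rxv_{\Tt+1}\mid\txm_{\Tt},\rxm_{\Tt})$, so $\lim_{\dx\upto1}\EntInnC{\txA}{\rxA}{\dx}{}\ge\EntInnC{\txA}{\rxA}{}{}$; under \eqref{eq:two_bound_met} the two outer limits coincide, pinching $\EntInnC{\txA}{\rxA}{}{}$ to their common value, which is \eqref{eq:continue_in_data}. I expect the genuine work to be not any single estimate but the bookkeeping in the first two steps: tracking, in each regime ($\dx=1$ versus general $\dx$; the off-diagonal versus the ``$^+$'' diagonal; a window in the training phase versus one in the data phase), which subsequence is exchangeable given which slice of the past, and thereby justifying both the monotone sandwich and the interchange of $\lim_{\Tlim\to\infty}$ with $\partial/\partial\dy$ (and, in~(c), with $\dy\downto1$ or $\dx\upto1$). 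That bookkeeping is the crux, and it is also why A1 — and i.i.d.\ inputs throughout for part~(b) — cannot be relaxed.
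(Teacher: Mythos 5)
Your proposal is correct and follows essentially the same route as the paper: a conditional version of Theorem~\ref{thm:derivative_monotonic} applied after establishing monotonicity of the per-step conditional entropies via conditioning-reduces-entropy plus a relabeling of identically distributed (conditionally i.i.d.\ given $\chva{\Tb}$) blocks, with part~(c) proved by the same two-sided squeeze. The only cosmetic differences are that you phrase the relabeling as exchangeability and differentiate the diagonal $\EntR(\rxA_\dy|\txA_\dy)$ directly using independence of future inputs, whereas the paper differentiates $\EntR(\rxA_\dy|\txA_\dx)$ at a fixed larger $\dx$ and then identifies it with the diagonal through \eqref{eq:H_prime_equ}--\eqref{eq:H_equ}; these rest on the same A1 identity.
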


\begin{proof} (a) Under A1, for all $\dx\geq 1$, we have
\begin{align*}
    \Ent(\rxv_{t+1}|\txm_{\ceil{\dx\Tt}},&\rxm_{t})\leq\Ent(\rxv_{t+1}|\txm_{\ceil{\dx\Tt}},\rxm_{t-1})\\
    =&\;\Ent(\rxv_{t}|\txm_{\ceil{\dx\Tt}},\rxm_{t-1}),
    \numberthis
    \label{eq:monotonic_A1}
\end{align*}
when $\ceil{\Tt}+1\leq t\leq \ceil{\dx\Tt}-1$ or $t\geq \ceil{\dx\Tt}+1$. Here, we use that the input is \iid and the system is memoryless and time invariant; the inequality follows from the fact that conditioning reduces entropy.  
Therefore, $\forall \kappa\in(0,\dy-1)$, $\Ent(\rxv_{t+1}|\txm_{\ceil{\Tt}},\rxm_{t})$ is monotonically decreasing in $t$ for $t\in[\floor{(\dy-\kappa)\Tt},\ceil{(\dy+\kappa)\Tt}]$ when $\Tt>\frac{2}{\dy-1-\kappa}$. Then, Theorem \ref{thm:derivative_monotonic} yields \eqref{eq:derivative_data}.

Also, $\forall \kappa\in(0,\dy-1),\dx>2\dy-1$, $\Ent(\rxm_{t+1}|\txm_{\ceil{\dx\Tt}},\rxm_{t})$ is monotonically decreasing in $t$ for $t\in[\floor{(\dy-\kappa)\Tt},\ceil{(\dy+\kappa)\Tt}]$ when $\Tt>\max(\frac{2}{\dy-1-\kappa},\frac{2}{\dx-\dy-\kappa})$. Then, Theorem \ref{thm:derivative_monotonic} yields
\begin{equation}
    \EntInnC{\txA}{\rxA}{\dx}{\dy} =\frac{\partial\EntR(\rxA_\dy|\txA_{\dx})}{\partial\dy}.
    \label{eq:derivative_1}
\end{equation}
Assumption A1 yields
\begin{align}
    \EntInnC{\txA}{\rxA}{\dx}{\dy}=\EntInnC{\txA}{\rxA}{\dy^+}{\dy},
    \label{eq:H_prime_equ}
\end{align}
where $\EntInnC{\txA}{\rxA}{\dy^+}{\dy}$ is defined in \eqref{eq:EntInn_cond_def_one_extra_high}, and 
\begin{align}
    \EntR(\rxA_\dy|\txA_{\dx})=\EntR(\rxA_\dy|\txA_{\dy}).
    \label{eq:H_equ}
\end{align}
Therefore, \eqref{eq:derivative_1} becomes \eqref{eq:derivative_training}.

(b) If $\txv_t$ are \iid for all $t$, then \eqref{eq:monotonic_A1} is valid for all $t\leq \ceil{\dx\Tt}-1$ or $t\geq \ceil{\dx\Tt}+1$. Therefore, Theorem \ref{thm:derivative_monotonic} yields \eqref{eq:derivative_more_iid_training}. By taking $\dy=1$ and $\dx>1$, \eqref{eq:derivative_more_iid_training}, \eqref{eq:H_prime_equ}, and \eqref{eq:H_equ} then yield \eqref{eq:derivative_training_0}.

(c) For $t\geq\ceil{\Tt}+1$, we have
\begin{align*}
  \Ent(\rxv_{t+1}|\txm_{\ceil{\Tt}},\rxm_{t}) \leq\Ent(\rxv_t|\txm_{\ceil{\Tt}},\rxm_{t-1}).
\end{align*}
Therefore,
\begin{equation*}
   \lim_{\dy\downto 1}\EntInnC{\txA}{\rxA}{}{\dy}\leq \EntInnC{\txA}{\rxA}{}{}.
\end{equation*}
Conditioning to reduce entropy again yields
\begin{align*}
  \Ent(\rxv_{\ceil{\Tt}+1}|\txm_{\ceil{\Tt}},\rxm_{\ceil{\Tt}}) \leq&\; \Ent(\rxv_{\ceil{\Tt}+1}|\txm_{\ceil{\dx\Tt}},\rxm_{\ceil{\Tt}}),
\end{align*}
for any $\dx<1$ and therefore,
\begin{equation*}
    \EntInnC{\txA}{\rxA}{}{}\leq \lim_{\dx\upto 1}\EntInnC{\txA}{\rxA}{\dx}{}.
\end{equation*}
Equation \eqref{eq:two_bound_met} then implies A2 \eqref{eq:continue_in_data}.
\end{proof}

Corollary \ref{cor:A1_derivative_relation} can now be used to finish the proof of Theorem \ref{thm:MI_equal_gap&computation_of_MuI_from_derivative}. By \eqref{eq:cI_def_with_P_tau_high}, and Assumptions A1--2
\begin{equation*}
    \cIInn{\txA}{\rxA}{} =\EntInnC{\txA}{\rxA}{}{} - \EntInnC{\txA}{\rxA}{^+}{} = \lim_{\dy\downto 1} \EntInnC{\txA}{\rxA}{}{\dy} - \lim_{\dy\downto 1} \EntInnC{\txA}{\rxA}{\dy^+}{\dy},
\end{equation*}
and together with Corollary \ref{cor:A1_derivative_relation}(a), we have \eqref{eq:MuI_from_derivative}.

\comm{
\section{Examples for applying Theorem \ref{thm:derivative_monotonic}}
\label{app:examples_for_derivative_monotonic}
In this section we first show, through two examples, how to apply Theorem \ref{thm:derivative_monotonic} where the conditions are met; and another two examples where the conditions are not met but still can be accommodated by proper extension of the definitions.  In this section, $\rxA=(\rx_{1},\rx_{2},\ldots)$ where $\rx_t$ is a scalar and $\rxv_{t}=[\rx_{1},\cdots,\rx_{t}]^\Tp$.

\subsection{Processes where $\EntInn(\rxA_\dy)$ is the derivative of $\EntR(\rxA_\dy)$ }
Two examples are given.

\noindent
{\em Example 1:} Let $\rxA$ be a stationary process where the joint distribution of any subset of the sequence of random variables is invariant with respect to shifts in the time index \cite{cover2012elements}, and where
\begin{align*}
    \lim_{\Tlim\to\infty}\Ent(\rx_{\Tt+1}|\rxv_{\Tt})=\EntR(\rxA)=\lim_{\Tlim\to\infty}\frac{1}{\Tt}\Ent(\rxv_T),
\end{align*}
where $\EntR(\rxA)$ is called the ``entropy rate" of $\rxA$. For all $\dy>0$, we have
\begin{align*}
\EntR(\rxA_\dy)=\lim_{\Tlim\to\infty}\frac{1}{\Tt}\Ent(\rxv_{{\dy \Tt}})=\dy\EntR(\rxA),
\end{align*}
and
\begin{equation}
\EntInn(\rxA_\dy)  =  \lim_{\Tlim\to\infty}\Ent(\rx_{{\dy\Tt}+1}|\rxv_{{\dy\Tt}})=\EntR(\rxA).
\end{equation}
  Because of stationarity, $\Ent(\rx_{t+1}|\rxv_t)$ is monotonically decreasing in $t$, and we see that $\EntInn(\rxA_\dy)$ is the derivative of $\EntR(\rxA_\dy)$, as expected.
  
However, we are generally interested in non-stationary processes, and the next example is a simple example containing a ``phase change" at $t=\Tt$.

\noindent
{\em Example 2:} Let
\begin{equation}
    \rx_{t} = \begin{cases} 
      b_t, & t=1,\ldots,\Tt; \\
      b_{(t-1\bmod\Tt)+1}, & t=\Tt+1,\Tt+2\ldots,
      \end{cases}
      \label{eq:repetition_eg}
\end{equation}
where $b_{1},b_{2},\ldots$ are \iid with entropy 1. There are two phases in $\rxA$: the first phase contains \iid elements, while the second phase contains repetitions of the first.  Clearly
\begin{equation}
    \EntInnY{\rxA}{\dy}  = \begin{cases} 
      1, & \dy \in[0,1); \\
      0, & \dy \geq 1,
      \end{cases}
\end{equation}
and $\Ent(\rx_{t+1}|\rxv_{t})$ is bounded by 1 and is monotonic for all $t$.
Theorem \ref{thm:derivative_monotonic} (or inspection) yields
\begin{align*}
    \EntR(\rxA_\dy)= \int_{0}^{\dy}\EntInnY{\rxA}{u} du=\begin{cases} 
      \dy, & \dy<1; \\
      1, & \dy\geq 1.
      \end{cases}
\end{align*}
Note that $\EntR(\rxA_\dy)$ is differentiable everywhere but $\dy=1$.  This point will reappear later.


\subsection{Processes where $\EntInn(\rxA_\dy)$ is not the derivative of $\EntR(\rxA_\dy)$}
\label{sec:general_scalar_process}
We examine, through two examples, what can go wrong when the conditions of Theorem \ref{thm:derivative_monotonic}  are not met.

\subsection*{Example 3: $\Ent(\rx_{t+1}|\rxv_{t})$ oscillates as $t$ increases}
Consider the process
\begin{equation}
    \rxA= (b_1,b_1,b_2,b_2,\ldots,b_{k},b_{k},\ldots),
    \label{eq:oscillation_eg_def}
\end{equation}
where $b_1,b_2,\ldots$ are \iid unit-entropy random variables. Then, for all $t\geq \Tt$,
\begin{equation*}
    \Ent(\rx_{t+1}|\rxv_{t}) = \begin{cases}
      1, \quad  t \quad\text{even} \\
      0, \quad t \quad\text{odd }
   \end{cases}
\end{equation*}
and $\EntInn(\rxA_\dy)$ does not exist for any $\dy$. However, $\EntR(\rxA_\dy)$ exists for all $\dy>0$ with
\begin{align}
    \EntR(\rxA_\dy)=\lim_{\Tlim\to\infty}\frac{1}{\Tt}\frac{{\dy \Tt}}{2}=\frac{\dy}{2},
    \label{eq:osc_eg_EntR}
\end{align}
which is differentiable for all $\dy>0$. The conditions for Theorem~\ref{thm:derivative_monotonic} are not met and the derivative relationship \eqref{eq:derivative_2} does not hold.

\subsection*{Example 4: $\Ent(\rx_{t+1}|\rxv_{t})$ is unbounded}
Consider a process $\rxA$ with independent elements whose entropies are
\begin{equation}
    \Ent(\rx_t) = \begin{cases}
      \Tt, & t=\frac{1}{2}\Tt-3\\
      1, & \text{otherwise}
   \end{cases}
   \label{eq:unbounded_eg}
\end{equation}
It is clear that $\Ent(\rx_t|\rxv_{t-1})$ is unbounded at $t=\frac{1}{2}\Tt-3$. Both $\EntInn(\rxA_\dy)$ and $\EntR(\rxA_\dy)$ exist with
\begin{align*}
    \EntInn(\rxA_\dy) = 1,\quad\dy\geq 0,
\end{align*}
\begin{equation}
    \EntR(\rxA_\dy) = \begin{cases}
      \dy, & \dy\in(0,\frac{1}{2}); \\
      1+\dy, & \dy\geq \frac{1}{2},
   \end{cases}
\end{equation}
but the conditions for Theorem \ref{thm:derivative_monotonic} are not met and the integral relationship \eqref{eq:theorem_integral_monotonic_thm} does not hold everywhere.

Nonetheless, these examples can still be accommodated by expanding the definition of $\EntInn(\rxA_\dy)$.
In Example 3, $\EntInn(\rxA_\dy)$ is not a good representative of $\Ent(\rx_{t+1}|\rxv_{t})$ when $t={\dy\Tt}$ because of its oscillatory behavior. A better representative of $\Ent(\rx_{t+1}|\rxv_{t})$ when $t={\dy\Tt}$ can be found by averaging:
\begin{equation*}
    \AEntInnY{\rxA}{\dy} =\lim_{\kappa\downto 0}\lim_{\Tlim\to\infty}\frac{\sum_{t={(\dy-\frac{\kappa}{2})\Tt}}^{{(\dy+\frac{\kappa}{2})\Tt}-1}\Ent(\rx_{t+1}|\rxv_{t})}{\kappa\Tt}.
\end{equation*}
With the new definition, for Example 3, 
\begin{align*}
    \AEntInnY{\rxA}{\dy} =\lim_{\kappa\downto 0}\lim_{\Tlim\to\infty}\frac{\kappa\Tt/2}{\kappa\Tt}=\frac{1}{2},
\end{align*}
which is the derivative of $\EntR(\rxA_\dy|\txA)$ shown in \eqref{eq:osc_eg_EntR}. Thus, averaging smooths out the oscillation and expands the class of processes for which Theorem \ref{thm:derivative_monotonic} holds.

In Example 4, $\AEntInnY{\rxA}{\dy}$ is unbounded at $\dy=\frac{1}{2}$.  By allowing an impulse function in $\AEntInnY{\rxA}{\dy}$ at $\frac{1}{2}$, we may then consider $\EntR(\rxA_\dy)$ as the integral of $\AEntInnY{\rxA}{\dy}$, thereby expanding the class of processes for which Theorem \ref{thm:derivative_monotonic} holds.  We do not pursue these issues any further.
}

\comm{
\begin{IEEEbiography}{Kang Gao}
(S'14) received the B.S. degree in electrical engineering from Huazhong University of Science and Technology, Wuhan, China in 2014, the M.S. and the Ph.D. degree in electrical engineering from the University of Notre Dame in 2017 and 2021, respectively.
\end{IEEEbiography}

\begin{IEEEbiography}{Bertrand M. Hochwald}
(S'90-M'95-SM'06-F'08) was born in New York, NY, USA. He received the bachelor’s degree from Swarthmore College, Swarthmore, PA, USA, the M.S. degree in electrical engineering from Duke University, Durham, NC, USA, and the M.A. degree in statistics, and the Ph.D. degree in electrical engineering from Yale University, New Haven, CT, USA.

From 1986 to 1989, he was with the Department of Defense, Fort Meade, MD, USA. He was a Research Associate and a Visiting Assistant Professor at the Coordinated Science Laboratory, University of Illinois at Urbana–Champaign, Urbana, IL, USA. In 1996, he joined the Mathematics of Communications Research Department, Bell Laboratories, Lucent Technologies, Murray Hill, NJ, USA, where he was a Distinguished Member of the Technical Staff.
In 2005, he joined Beceem Communications, Santa Clara, CA, USA, as the Chief Scientist and Vice-President of Systems Engineering. He served as a Consulting Professor of Electrical Engineering at Stanford University, Palo Alto, CA, USA. In 2011, he joined the University of Notre Dame, Notre Dame, IN, USA, as a Freimann Professor of Electrical Engineering.

Dr. Hochwald received several achievement awards while employed at the Department of Defense
and the Prize Teaching Fellowship at Yale University. He has served as an Editor of several
IEEE journals and has given plenary and invited talks on various aspects of signal processing
and communications. He has forty-six patents and has co-invented several well-known
multiple-antenna techniques, including a differential method, linear dispersion codes, and
multi-user vector perturbation methods. He received the 2006 Stephen O.\ Rice Prize for the
best paper published in the IEEE Transactions on Communications. He co-authored a paper that
won the 2016 Best Paper Award by a young author in the IEEE Transactions on Circuits and
Systems.  He also won the 2018 H.\ A.\ Wheeler Prize Paper Award from the IEEE Transactions on
Antennas and Propagation.  His PhD students have won various honors for their PhD research,
including the 2018 Paul Baran Young Scholar Award from the Marconi Society.  He is listed as a
Thomson Reuters Most Influential Scientific Mind in multiple years.  He is a Fellow of the
National Academy of Inventors.
\end{IEEEbiography}
}

\bibliographystyle{IEEEtran}
\bibliography{bib/IEEEarbv,bib/refs.bib}

\end{document}